\title{
    \titleofourpaper
}
\author[1]{Francesco Betti Sorbelli}
\author[1]{Federico Corò}
\author[2]{Sajal K. Das}
\author[1]{Cristina M. Pinotti}
\author[3]{Anil Shende}
\affil[1]{Department of Computer Science and Mathematics, University of Perugia, Italy}
\affil[2]{{Department of CS}, {Missouri University of Science and Technology}, {USA}}
\affil[3]{{Department of Math., CS, \& Physics}, {Roanoke College}, {USA}}
\newtheorem{theorem}{Theorem}
\newtheorem{lemma}{Lemma}
\newcommand*{\unit}[1]{\ensuremath{\mathrm{\,#1}}}
\newcommand{\descr}[1]{\medskip\noindent\textbf{#1}}
\renewcommand{\paragraph}{\descr}
\newcommand{\titleofourpaper}{Dispatching Point Selection for a Drone-Based Delivery System Operating in a Mixed Euclidean-Manhattan Grid\xspace}
\DeclareMathOperator*{\argmin}{arg\,min}
\newcommand{\prob}{\textsc{MEMP}\xspace}
\newcommand{\problong}{$1$-Median Euclidean-Manhattan grid Problem\xspace}
\newcommand{\copt}{\ensuremath{u^{*}}\xspace} 
\newcommand{\rowopt}{\ensuremath{r^{*}}\xspace} 
\newcommand{\colopt}{\ensuremath{c^{*}}\xspace} 
\newcommand{\hr}{\ensuremath{\overline{R}}\xspace}
\newcommand{\hc}{\ensuremath{\overline{C}}\xspace}
\newcommand{\hk}{\ensuremath{\overline{K}}\xspace}
\newcommand{\DE}[1]{\Delta_E({#1})}
\newcommand{\DM}[1]{\Delta_M({#1})}
\newcommand{\EM}{EM-grid\xspace}
\newcommand{\EMs}{EM-grids\xspace}
\newcommand{\algoptf}{OPT-F\xspace}
\newcommand{\algcmallf}{CMALL-F\xspace}
\newcommand{\algcembf}{CEMB-F\xspace}
\newcommand{\algcmebf}{CMEB-F\xspace}
\newcommand{\algbestf}{BEST-F\xspace}
\newcommand{\algoptp}{OPT-P\xspace}
\newcommand{\algcmallp}{CMALL-P\xspace}
\newcommand{\algcembp}{CEMB-P\xspace}
\newcommand{\algcmebp}{CMEB-P\xspace}
\newcommand{\algsuboptp}{S-OPT-P\xspace}
\newcommand{\cmall}{\ensuremath{u_M}\xspace}
\newcommand{\cemb}{\ensuremath{u_{\hat{C}}}\xspace}
\newcommand{\cmeb}{\ensuremath{u_{\hat{M}}}\xspace}
\newcommand{\csopt}{\ensuremath{\overline{u}}\xspace}
\newcommand{\costbar}{{\mathcal{\overline{C}}}}
\begin{document}

\maketitle

\begin{abstract}
In this paper, we present a drone-based delivery system that assumes to deal with two different mixed-areas, i.e., rural and urban.
In these mixed-areas, called \EMs, the distances are measured with two different metrics, and the shortest path between two destinations concatenates the Euclidean and Manhattan metrics.
Due to payload constraints, the drone serves a single customer at a time returning back to the dispatching point (DP) after each delivery to load a new parcel for the next customer.
In this paper, we present the \problong (\prob) for \EMs, whose goal is to determine the drone's DP position that minimizes the sum of the distances between all the locations to be served and the point itself.
We study the \prob on two different scenarios, i.e., one in which all the customers in the area need to be served (full-grid) and another one where only a subset of these must be served (partial-grid).
For the full-grid scenario we devise optimal, approximation, and heuristic algorithms, while for the partial-grid scenario we devise optimal and heuristic algorithms.
Eventually, we comprehensively evaluate our algorithms on generated synthetic and quasi-real data.
\end{abstract}

\maketitle

\section{Introduction}
Drones or Unmanned Aerial Vehicles (UAVs) are recently becoming widely used in civil applications such as environmental protection~\cite{wivou2016air,calamoneri2022realistic,justin2022integrated,qu2023environmentally}, public safety~\cite{li2015drone,he2017drone,pant2021fads}, localization~\cite{niculescu2022energy}, and smart agriculture~\cite{moribe2018combination,jawad2019wireless}.  
Currently, there is a growing interest in the use of drones in smart cities~\cite{nguyen2021drone}.
This interest is particularly increased after the global presence of the COVID-19 disease~\cite{preethika2020artificial,jat2020artificial}.
Recently, \cite{anggraeni2020deployment} discuss about the use of drones as carriers for distributing and transporting drugs and medicines, and revealed that 86.7\% of people agree that drones are more effective, surely faster, and less polluting than any ground-based distribution system.
Quarantine, closure of borders, and social distancing forced people to stay indoors for long periods, allowing them to go out only for essential activities.
Therefore, considering the need to avoid all unnecessary direct human contact, people started to rely heavily on online stores for their regular shopping.
In parallel, large companies like Amazon are testing drone-based delivery systems, particularly for what is known as ``last-mile" small item logistics.
For example, Amazon introduced ``Amazon Air Prime'', a service that uses drones able to deliver goods up to $25\unit{kg}$ to customers within a radius of $16\unit{km}$~\cite{pandit2014study,welch2015cost,shavarani2018application}, and Domino's developed a pizza-delivery service using drones~\cite{pepitone2013domino}.

There are countless advantages to using drones for deliveries, including economic benefits, saving on greenhouse gas, and the ability to deliver in time-critical situations or hard-to-reach places.
With the growth of commercial interest, researchers have begun to study variants of the Traveling Salesman Problem (TSP) for drones~\cite{ha2015heuristic}. 
However, due to the payload constraint that forces the drone to return after each delivery, TSP is not suitable for drones~\cite{sawadsitang2019multi, dayarian2020same}.
Of particular interest is the work proposed by~\cite{agatz2018optimization}, where a drone combined with a truck is used to make multiple deliveries in a given area, going back and forth from the truck. 
In that case, it is crucial to find the best location for the truck to minimize the distance traveled by the drone. 

\begin{figure}[ht]
    \centering
    \subfloat[Rural area: Miami, Florida, US.]{%
        \includegraphics[height=4.3cm]{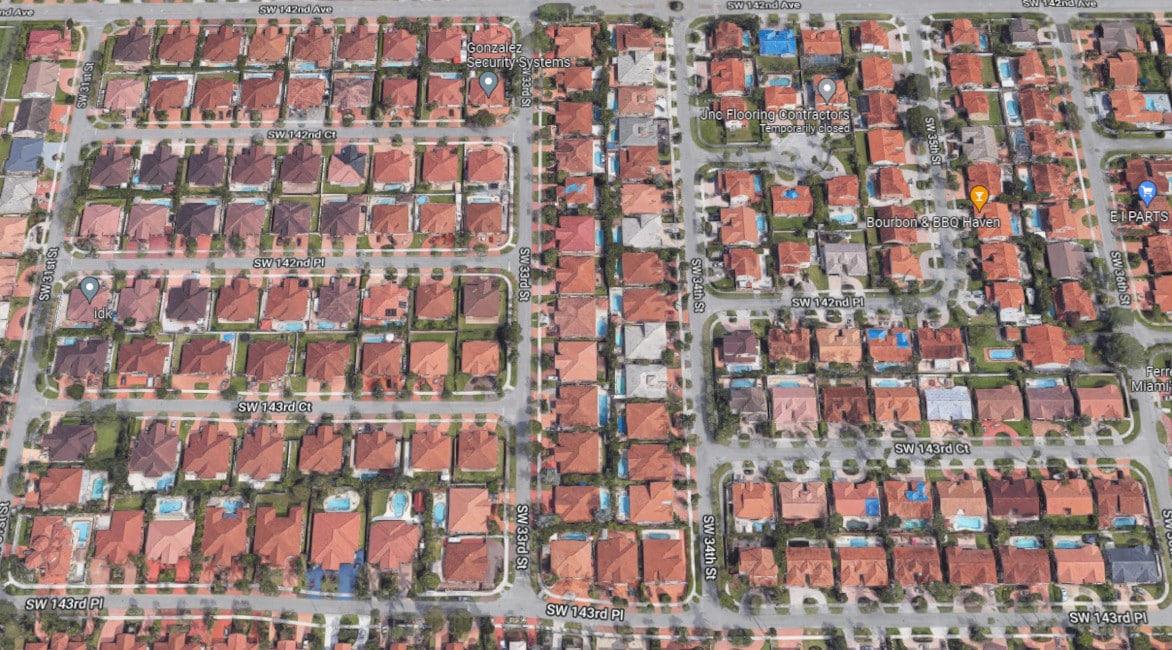}
        \label{img:rural-area}
    }
    \hfill
    \subfloat[Urban area: New York City, New York, US.]{%
        \includegraphics[height=4.3cm]{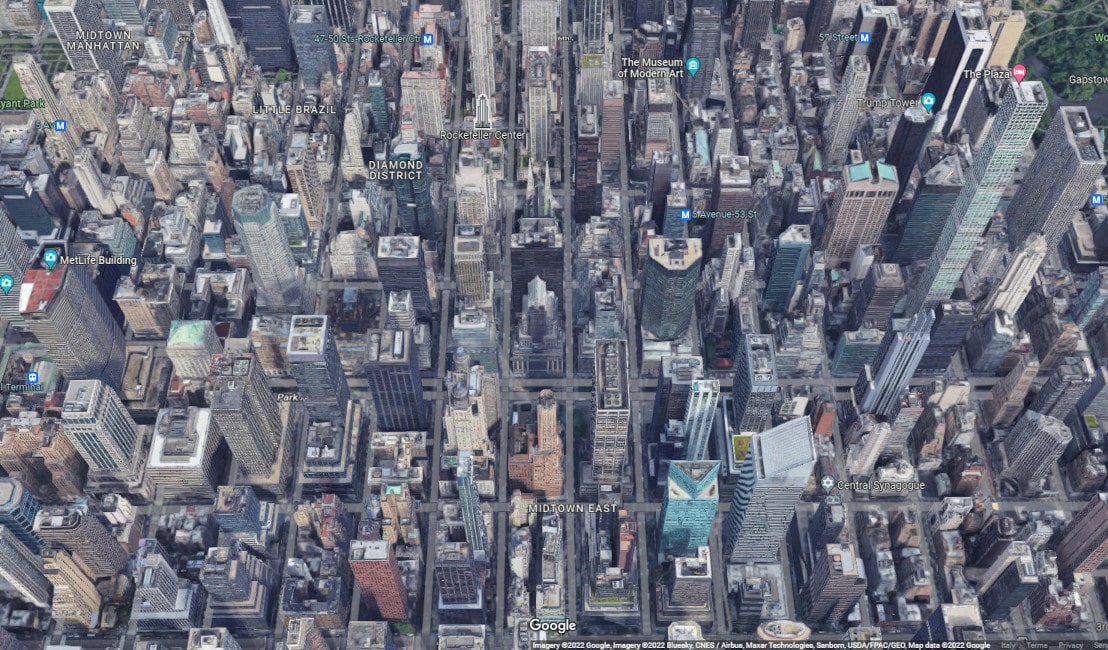}
        \label{img:urban-area}
    }
    \hfill
    \subfloat[Mixed area: Chicago, Illinois, US.]{%
        \includegraphics[height=4.3cm]{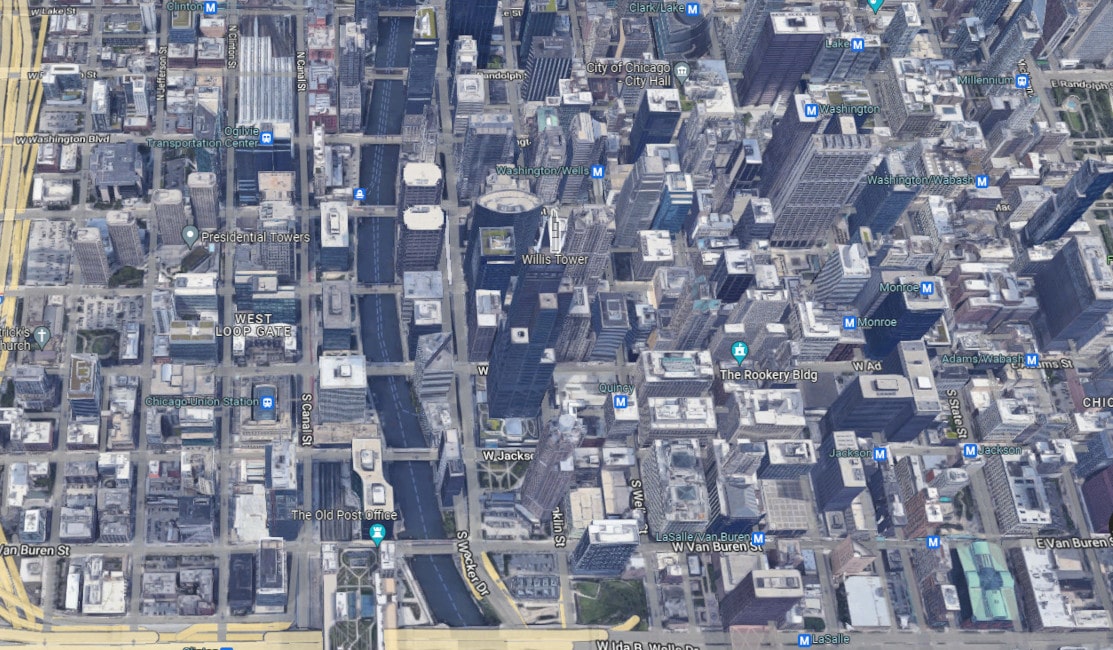}
        \label{img:mixed-area}
    }
    \caption{Examples of areas. In the rural area the drone freely moves on straight lines according to the Euclidean metric (a), while in the urban area, due to the presence of skyscrapers, the drone can only move over the streets according to the Manhattan metric (b). In the mixed area the drone has to combine both the metrics (c).}
    \label{img:areas}
\end{figure}

In this article, we imagine offering a drone delivery service to the customers of a delivery area that covers two mixed areas, i.e., a rural and/or an urban one. 
In urban areas (see Figure~\ref{img:urban-area}), for safety and privacy reasons (see regulation of inhabited centers, e.g.,~\cite{FAADrone77:online}, it is assumed that the drone flies over the streets since it cannot fly beyond a certain maximum allowed altitude.
Namely, in urban areas, it might be forbidden for the drone to travel along the straight line connecting the dispatching point (DP) and the delivery if such a straight line, due to tall obstacles, exists only at an altitude higher than the maximum altitude allowed by the regulations.
However, the drone can at least save time -- if not distance traveled -- in the urban area.
In fact, a drone is usually faster than a conventional wheeled vehicle in a crowded area: it will certainly not get stuck in traffic.
In rural areas (see Figure~\ref{img:rural-area}), instead, due to the fact that tall buildings are not present, the drone can move freely and follow the shortest path to reach its destination, making clear its advantage on the traveled distance in this case.
In the case of mixed areas (see Figure~\ref{img:mixed-area}), i.e., rural and urban, the drone must follow both the Euclidean and Manhattan metrics.
In particular, in the real example provided in Figure~\ref{img:mixed-area}, the Euclidean part resides on the left, while the Manhattan part resides on the right.
Actually, the two areas are split by the river.

We model the proposed drone delivery area as a two-dimensional mixed-grid. 
The vertices of the grid represent both the possible locations of the drone's DP and the possible delivery destinations, and they are placed in rows and columns as in a regular 2-D grid.
To model the two different types of areas through which the drone can move, we have divided the grid into two parts: the rural area and the urban area.
In the former, the distance between two destinations respects the Euclidean metric, while in the latter, the metric is Manhattan (taxicab geometry).
We are then interested in finding where to set the DP in order to minimize the sum of the distances between a subset of delivery destinations and the DP itself.
Due to strict payload constraints, the drone must return back to the DP after each delivery. 
Therefore, the drone needs to travel back and forth from the DP as many times as the number of delivery destinations on the grid. 
In this paper, we consider two delivery scenarios.
The full-grid scenario where each point of the grid has to be served by the drone.
In light of the COVID-19 pandemic, this scenario happens, for example, if the drone is used to deliver meals in a lockdown area, or to deliver self-tests to sick people.
In this particular example, the full-grid scenario is also justified by the fact that after every delivery the drone should be properly sanitized and disinfected before performing the next delivery, as proposed by~\cite{kunovjanek2021containing}.
The partial-grid scenario, instead, assumes that only a subset of points of the grid is a delivery site.
This is the usual scenario in a delivery system.

The DP selection problem in logistics, while sharing similarities with the classical facility location problem, requires a more complex modeling approach that considers factors such as varying demand and transportation costs, and it involves optimizing multiple metrics simultaneously (Euclidean and Manhattan), which is not the case for the original facility location problem.

In this paper, we present extensions and improvements to our earlier work discussed in two conference papers, i.e.,~\cite{bartoli2019exact} and~\cite{sorbelli2019automated}. 
In~\cite{bartoli2019exact} we focused on the full-grid scenario. 
In this paper we extend our work to include a partial-grid scenario. 
We suitably adapt results and ideas from~\cite{sorbelli2019automated} to this scenario, and provide new efficient algorithms for both the scenarios.
The contributions of this paper are summarized as follows.
\begin{itemize}
    \item We introduce the \EM model, which characterizes the delivery area for the drone, formed by two contiguous areas, i.e., one rural and one urban, that follow the Euclidean and Manhattan metrics, respectively.
    \item We define the \problong (\prob) and devise time-efficient algorithms for the full-grid and partial-grid scenarios.
    For the full-grid scenario we devise optimal, approximation, and heuristic algorithms, while for the partial-grid scenario we devise optimal and heuristic algorithms. 
    We also give all the proofs for the full-grid scenario that had been left blank in the previous conference paper~\cite{bartoli2019exact}.
    \item In addition to comparing the performance of our presented algorithms on randomly generated synthetic data, we also evaluate their effectiveness on quasi-real data obtained by adapting real city maps with our proposed grid model, providing a more realistic assessment of their practical utility.
    Furthermore, to ensure the accuracy of our comparison, we incorporated data from a real drone to measure the distances traveled by the dispatching point in our evaluation on quasi-real data.
\end{itemize}

The rest of the paper is organized as follows.
Section~\ref{sec:related} reviews the related work.
Section~\ref{sec:definition} formally defines \prob.
Section~\ref{sec:full} and Section~\ref{sec:partial} describe properties and algorithms for efficiently solving \prob with full-grid and partial-grid scenarios, respectively.
Section~\ref{sec:evaluation} evaluates our algorithms, and Section~\ref{sec:conclusion} offers conclusions and future research directions.

\section{Related Work}\label{sec:related}
In the literature, many works attempt to solve the drone-based last-mile delivery problem.
To the best of our knowledge, drones have been considered in a delivery system for the first time by~\cite{murray2015flying}. 
Specifically, they study the cooperation between a truck and a drone to deliver packages to customers.
The problem to solve is the Flying Sidekick Traveling Salesman Problem (FSTSP), which is a variant of the classic TSP.
In the FSTSP, a drone can autonomously perform deliveries to the customers directly flying from the main depot or can be helped by a truck.
In the latter case, the drone flies from the truck, delivers the package, and then rendezvouses with the truck again in a third location.
However, when the drone flies, the truck can do other deliveries independently, but still, it has to wait for the drone at the rendezvous location.
For solving FSTSP, the authors propose an optimal mixed-integer linear programming (MILP) and two heuristics for solving instances of practical sizes.
Then, \cite{murray2020multiple} investigate the same scenario with multiple drones in by introducing the Multiple Flying Sidekicks Traveling Salesman Problem (mFSTSP).
Even for the mFSTSP, they provide an optimal MILP formulation along with a heuristic solution approach that consists of solving a sequence of three sub-problems.

Recently, \cite{dell2022exact} present an exact formulation for FSTSP while also simplifying the model reducing the number of constraints and thus be able to solve several benchmark instances from the literature.
However, in these works, the drones fly according to the Euclidean metric.
\cite{kloster2023multiple} introduce the multiple Traveling Salesman Problem with Drone Stations (mTSP-DS), which is an extension to the classical multiple Traveling Salesman Problem (mTSP).
In this problem, multiple trucks starting to/from a single depot are in charge of supplying some packet stations that host autonomous vehicles (drones or robots).
On these stations, each truck can launch and operate drones/robots to serve customers.
The objective of mTSP-DS is to serve all customers minimizing the makespan. 
The problem is formulated as an MILP only suitable to solve small instances. 
For larger instances, many matheuristic algorithms are presented.
\cite{schermer2019hybrid} introduce the Vehicle Routing Problem with Drones and En Route Operations (VRPDERO), which is an extension to the Vehicle Routing Problem with Drones (VRPD).
In this problem, drones may not only be launched and retrieved at vertices but also on some discrete points that are located on each arc. 
The problem is formulated as an MILP, and matheuristic approaches are presented to deal with large instances.
The goal of both~\cite{schermer2019hybrid, kloster2023multiple} is to minimize the makespan, while ours is to find the best DPs from where to launch the drones.

\cite{bartoli2019exact} introduce the drone-based delivery area modeled as \EMs where a drone is used for delivering small packages to customers.
Given the delivery area divided into two contiguous areas, i.e., the rural and the urban areas, the goal is to find the optimal DP (depot or warehouse) for the drone in order to minimize the sum of all the distances between all the potential customers and the DP itself.
However, due to strict payload constraints, the drone cannot serve more than a customer at a time, and after each delivery, the drone must go back to the depot.
A similar approach, but in a different context, has been studied by~\cite{sorbelli2019automated}.
In such a scenario, the objective is to determine the optimal cart point for the drone that minimizes the distances between a set of items (on shelves) and the cart itself, assuming that shelves follow the two aforementioned metrics.
Differently from~\cite{bartoli2019exact}, in~\cite{sorbelli2019automated} only a subset of points needs to be considered.
Moreover, \cite{sorbelli2019automated} compare the current human-based system with respect to the newly proposed one based on drones.

\cite{brown2021distance} compare and contrast the performance of many algorithms in a delivery scenario.
They model the delivery area as a circular region with a central depot, while customers are randomly distributed throughout the region.
Different temporal and spatial metrics are compared when evaluating these algorithms.
In particular, they evaluate the impact of having distances measured according to both the Euclidean and Manhattan distance metrics. 
The number of customers stochastically varies under both Manhattan and Euclidean distance metrics.
The paper states that the number of customer deliveries and the metric used to measure travel distance impacts a decision maker's choice of the best algorithm and that employing multiple algorithms is recommended.

\cite{ai2021neighborhood} explore the implications and advantages of strategic planning on urban delivery services.
More specifically, the preferred method and local impacts of vehicle trips may vary by neighborhood characteristics (e.g., traffic or customer demands). 
Instead of searching for an optimal route, the paper focuses on the estimation of the vehicles' miles traveled (VMT) per meal order, considering different types of neighborhoods, delivery scenarios, and strategies. 
The proposed system is tested and evaluated in Chicago, showing that alternative delivery strategies can greatly reduce the VMT per order based on the type of neighborhood. 
In the evaluation, both the Euclidean and Manhattan metrics are combined.
However, although different metrics have been evaluated, drones have not been used in either~\cite{brown2021distance} or~\cite{ai2021neighborhood}.

Recently, there has been more effort in solving the last-mile delivery problem using drones instead of a standard vehicle, due to the flexibility of drones and their capability to fly over obstacles and avoid traffic.
\cite{poikonen2019branch} investigate the problem of solving the TSP with a Drone (TSP-D) where the drone rides on the truck. 
%
\cite{sorbelli2022scheduling} investigate the cooperation between a truck and multiple drones.
Each delivery is characterized by a drone's energy cost, a reward based on its priority, and a time interval (launch and rendezvous with the truck). 
This work aims at finding an optimal scheduling for the drones that maximizes the overall reward, subject to the drone’s battery capacity while ensuring that the same drone performs deliveries that do not overlap. 
Results show that the presented problem is $NP$-hard, therefore, different heuristics for solving the problem in a time-efficient way are proposed.
More recently, \cite{sorbelli2023wind} investigate the feasibility of performing deliveries with a drone in the presence of external factors such as wind.

\cite{li2020impact} compare the traditional truck-based delivery system against the drone-based one to reduce the general energy consumption and hence reduce the gas emissions. 
They also take into account traffic congestion.
They propose a mixed-integer green routing model with traffic restrictions and a genetic algorithm to efficiently solve the complex routing problem, showing that drones can accomplish more deliveries and at a lower cost (in terms of CO\textsubscript{2} emissions and energy consumption) compared to standard methods and that traffic types impact the results.

\cite{salama2020joint} study the last-mile scenario formed by multiple drones assisted by a single truck that carries them.
The customers to be served by the drones form a clustering, and each drone is assigned to a specific cluster.
The initial position of a drone is called a ``cluster focal point''.
Once all the focal points are computed, the truck needs to visit these points by minimizing its traveled route.
Moreover, due to payload constraints, the drones serve their customers one at a time.
The truck cannot follow the Euclidean metric, while the drones do.
The authors propose an optimal mixed integer nonlinear programming (MINLP) solution as well as an unsupervised machine learning-based heuristic algorithm.

\cite{dukkanci2021minimizing} present a variation on the theme.
The proposed model assumes that drones are assisted by trucks, that carry them through the city.
The trucks start from the main depot and park the drones in specific locations where drones have to serve the customers by a sequence of back-and-forth flights.
Both the trucks and the drones move according to the Euclidean metric.
The introduced problem is called Energy Minimizing and Range Constrained Drone Delivery Problem (ERDDP) whose objective is to minimize the total operational cost including an explicit calculation of the energy consumption of the drone as a function of the drone's speed.
The ERDDP is formulated as a second order cone program instance.

\cite{pinto2022point} propose a last-mile drone delivery scenario where multiple drones can exploit charging stations to replenish their batteries.
In this setting, the drones can fly from the main hub to the terminal station and serve, one at a time, a subset of customers in the neighborhood, and then go back to the hub.
Alternatively, they can move from a terminal station to a charging station to refill the battery and perform subsequent deliveries to other neighborhoods belonging to other terminal stations.
The objective function aims to either minimize the number of charging stations or to minimize the overall traveled distance.
The authors solve this problem by proposing an optimal and a heuristic solution.

\cite{karak2019hybrid} combine the pickup and the delivery requests in a system with stations (nodes in a given graph) and introduce the Hybrid Vehicle-Drone Routing Problem. 
Vehicles visit stations to transport delivery items and drones, while drones are launched and collected only at stations. 
The problem is formulated as a mixed-integer program, which minimizes the vehicle and drone routing cost to serve all customers.
To solve the problem, the authors use an extension of the classic Clarke and Wright algorithm (see~\cite{clarke1964scheduling}, a known heuristic to solve the Vehicle Routing Problem.
We remark that while their goal is to find the best route for drones and trucks, our goal is to find the best stations, i.e., the DPs, from where to launch the drones.

Finally, \cite{hong2018range} investigate the problem of placing drone charging facilities in an area to help increase the coverage range of drones for commercial deliveries.
The authors present an MILP formulation, and then a heuristic is proposed to effectively solve the problem.

\section{Problem Definition}\label{sec:definition}
In this section, we first introduce the delivery area model and how the drone moves inside it, and then formally describe the problem to solve.

\subsection{Delivery Area Model}
To model the delivery area, we define the \textit{Euclidean-Manhattan-Grid} (\EM) as $G=(R, C, K)$, that is a 2-D grid with $R$ rows, $C$ columns, and the \textit{Border} $B$ is the column $K \in [1, C]$ that separates the \textit{Euclidean} grid $E$ (rural area) from the \textit{Manhattan} grid $M$ (urban area) (see Figure~\ref{fig:em-grid}). 
Specifically, $E = \{1, \dots, R\} \times \{1, \dots, K\}$, $B = \{1, \ldots, R\} \times \{K\} \subseteq E$, and $M = \{1, \dots, R\} \times \{K+1, \dots, C\}$.

\begin{figure}[htbp]
    \centering
    \includegraphics[scale=1.2]{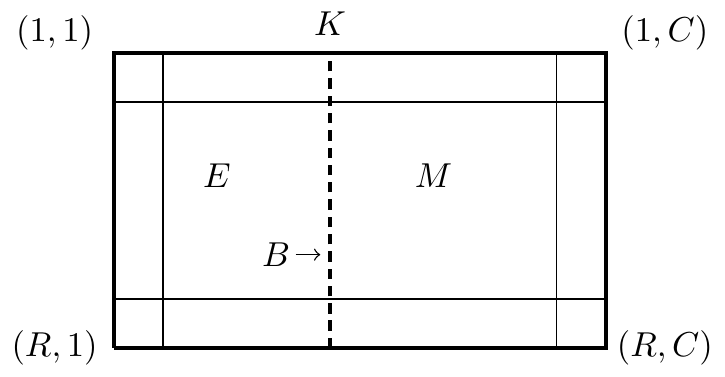}
    \caption{The \EM: $E$ models the rural area (see Figure~\ref{img:rural-area}), while $M$ models the urban area (see Figure~\ref{img:urban-area}).}
    \label{fig:em-grid}
\end{figure}

We assume that the drone delivery system covers a rectangular area: $E$ and $M$ have the same number of rows $R$.
The border consists of a single column, i.e., $K$.
Conventionally, the area consists only of a rural area region if $K=C$ (i.e., $M=\emptyset$), whereas it is effectively limited to an urban area if $K=1$.
In an \EM, there are vertices and edges connecting adjacent vertices.
Any internal vertex $u=(r_u,c_u)$ of $G$, i.e., with $1 < r_u < R$ and $1 < c_u < C$, is connected to the four adjacent vertices $(r_u,c_u \pm 1)$ and $(r_u \pm 1,c_u)$; whereas, in general, any vertex of the grid, i.e., with $1 \leq r_u \leq R$ and $1 \leq c_u \leq C$, is connected only to the existing adjacent vertices (i.e., an external vertex has only three or two neighbors).
For simplicity, we assume that the distance between any two pairs of consecutive vertices on the same row or column is constant and unitary, and so the \textit{weight} of any edge is unitary.
In this work, the ``distance'' is a measure of the \textit{time required} for performing the delivery, or of the \textit{needed energy} for shipping the package.
We also assume that every customer can be reached by the drone to/from the DP.
At the DP, the drone can recharge its battery, or just replace it with fresh ones.
Let $\hr = \lceil \frac{R}{2} \rceil$ be the middle row, let $\hc = \lceil \frac{C}{2} \rceil$ be the middle column, and let $\hk = \lceil \frac{K}{2} \rceil$ be the column that halves the Euclidean sub-grid.

For any two vertices $u$, $v$ in $G$, the distance $d(u,v)$ is the length of the shortest path traversed by the drone in the \EM to go from vertex $u$ to the destination $v$.
The Euclidean and Manhattan distances are defined, respectively, as $d_E(u, v)= \sqrt{(r_u-r_v)^2+(c_u-c_v)^2}$ and $d_M(u,v)=\vert r_u-r_v\vert +\vert c_u-c_v\vert $.

We note that the shortest path between a vertex $u \in E$ and a vertex $v \in M$ is given by $\min_{w \in B} \{d_E(u, w) + d_M(w, v)\}$.
In Lemma~\ref{lemma:path-between-set} we prove that such path is unique and passes through the vertex on the border $B$ that has the same row as $v$.

\begin{lemma}\label{lemma:path-between-set}
Consider an EM-Grid $G=(R,C, K)$. Given $u=(r_u, c_u) \in E$ and $v=(r_v, c_v) \in M$, then $d(u,v)= d_E(u, h) + d_M(h, v)$ with $h = (r_v, K)$.
\end{lemma}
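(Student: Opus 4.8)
The plan is to reduce the statement to a single use of the triangle inequality in the Euclidean plane. I take as given the characterization recorded just above the lemma, $d(u,v)=\min_{w\in B}\{d_E(u,w)+d_M(w,v)\}$; if a justification is wanted it is the standard one, namely that any path the drone takes from $u\in E$ to $v\in M$ must cross the border column, so cutting the path at its last border vertex $w$ shows its length is at least $d_E(u,w)+d_M(w,v)$, while the concatenation of the straight segment from $u$ to $w$ inside the rectangle $E$ with a monotone staircase from $w$ to $v$ inside $M$ attains this value. Thus it only remains to identify the minimizer over $w\in B$.

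I would then parametrize the border vertices as $w=(r,K)$ with $r\in\{1,\dots,R\}$. Since $v\in M$ we have $c_v>K$, hence $d_M(w,v)=|r-r_v|+(c_v-K)$, where the term $c_v-K$ does not depend on $r$. Minimizing $d_E(u,w)+d_M(w,v)$ over $B$ is therefore the same as minimizing $\phi(r):=d_E(u,w)+|r-r_v|=\sqrt{(r_u-r)^2+(c_u-K)^2}+|r-r_v|$ over $r$.

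The key observation is that $\phi(r)$ is precisely the Euclidean length of the two-leg polygonal path $u\to w\to h$, where $h=(r_v,K)$: indeed $d_E(w,h)=|r-r_v|$ because $w$ and $h$ lie on the same column $K$. By the triangle inequality applied to the points $u$, $w$, $h$ in $\mathbb{R}^2$ we get $d_E(u,w)+d_E(w,h)\geq d_E(u,h)=\sqrt{(r_u-r_v)^2+(c_u-K)^2}=\phi(r_v)$, so $\phi(r)\geq\phi(r_v)$ for every real $r$, and in particular for every admissible integer $r$. Because $r_v\in\{1,\dots,R\}$, the border vertex $h=(r_v,K)$ attains the minimum, which yields $d(u,v)=d_E(u,h)+d_M(h,v)$; as a by-product the optimal route is the straight segment from $u$ to $h$ followed by the horizontal segment from $h$ to $v$ along row $r_v$.

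I do not expect a genuine obstacle here. The two points that need a little care are the reduction to the $\min_{w\in B}$ expression (handling paths that cross the border several times, which is dispatched by considering only the last crossing) and the uniqueness claim made in the surrounding text: the triangle inequality above is strict unless $u$, $w$ and $h$ are collinear with $w$ between the other two, and since $w$ and $h$ both lie on the column $K$ this forces $c_u=K$, i.e. $u\in B$. Hence for $u\in E\setminus B$ the minimizer $h$, and thus the shortest path, is unique, whereas for $u\in B$ the distance formula still holds although several shortest paths coexist.
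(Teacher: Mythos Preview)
Your proof is correct and follows essentially the same route as the paper: both reduce to the minimization over $w\in B$ and then apply the triangle inequality to the triple $u,w,h$ in the Euclidean plane to show that $h=(r_v,K)$ is the minimizer. Your write-up is somewhat more detailed (you justify the $\min_{w\in B}$ reduction and discuss uniqueness), but the core argument is identical.
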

\begin{proof}
Consider the vertex $h = (r_v, K)$ which shares the same row as $v$, and another vertex $w=(i, K)$ in $B$ with $h \not = w$.
We want to prove that: $d_E(u,h) + d_M(h, v) \leq d_E(u,w) + d_M(w, v)$.
This follows by the triangle inequality applied to the vertices $u, w, h$, i.e., $d_E(u,h) \leq d_E(u,w) + \vert r_v - i\vert = d_E(u,w)+d_M(w,h)$.
\end{proof}

Thus, from now on, $d(u,v)$ is given by:

\begin{equation}\label{eq:distance}
    \begin{array}{ll}
    d(u,v)=
    \begin{cases}
    d_E(u, v) & \text{if $u, v \in E$} \\
    d_M(u, v) & \text{if $u, v \in (M \cup B)$}\\
    d_E(u, h) + d_M(h, v) & \text{if $u \in E, v\in M$ where $h = (r_v, K) \in B$} \\
    d_M(u, h) + d_E(h, v) & \text{if $u\in M, v \in E$ where $h = (r_u, K) \in B$}
    \end{cases}
    \end{array}
\end{equation}

\subsection{The Column-Cost}\label{subsec:colcost}
Let the \textit{column-cost} be the distance traversed by the drone starting from a vertex $u$ in row \hr to serve all the vertices of a given column, where the column is on the same side (Euclidean or Manhattan) of the grid as the vertex $u$.
Such column-cost depends on which side of the grid the column and the vertex $u$ reside and on the number of rows $R$.
Let $\DE{j}$ be the column-cost to serve a Euclidean column at distance $j$ from the candidate DP $u=(\hr, c_u)$ with $u \in E$.
Similarly, let $\DM{j}$ be the column-cost to serve a Manhattan column at distance $j$ from the drone's DP $u=\left (\hr, c_u \right)$ with $u \in M \cup B$.
One can easily find that:
\begin{align}
    \DE{j} &= j + 2 \sum\limits_{i=1}^{\hr-1} \sqrt{i^2 + j^2} + ((R-1) \bmod 2) \sqrt{\hr^2 + j^2} \label{eq:column-cost-E} \\
    \DM{j} &= j + 2 \sum\limits_{i=1}^{\hr-1} (i + j) + ((R-1) \bmod 2) \left(\hr + j \right) \nonumber \\
    &= \hr (\hr +1 +2j)+j + ((R-1) \bmod 2) (\hr + j) \label{eq:column-cost-M}
\end{align}

\subsection{The Problem Formulation}
In our scenario, the fundamental task is to serve, with the aid of a drone, the customers of an area, e.g., to distribute viral tests to potentially infected patients.
Due to payload constraints (and, e.g., to avoid the spread of the disease), the drone cannot serve all the customers on the same flight, and it has necessarily to go back and forth from a specific position inside the delivery area (called DP, where the drone, e.g., can be also sanitized each time) to all the customers.
Specifically, this DP is the point in which all the products for customers are initially stored.
Hence, the objective is to minimize the distance traveled by the drone when it moves inside the delivery area. 
We denote this problem as the \problong (\prob) since the goal is to find a \textit{single} DP inside \EMs.

Given an \EM $G=(R, C, K)$ and a subset of vertices $H \subseteq G$, for an arbitrary vertex $u=(r_u,c_u) \in G$, we define the cost of delivery from $u$ to each point in $H$, denoted by $\mathcal{C}(H, u)$, as:
\begin{equation}\label{eq:partial}
    \mathcal{C}(H, u) = 2\sum_{v \in H} d(v, u)
\end{equation} 
As noted above, the distance between points $u$ and $v$ is a measure of the cost of delivery from $u$ to $v$, and the multiplicative constant 2 is in consideration of the round trip for each delivery.
Given $H \subseteq G$, let $H_E$ and $H_M$ be the subset of points that lie in the Euclidean and Manhattan grid, respectively, such that $H_E \cup H_M = H$ and $H_E \cap H_M = \varnothing$.
The set $H$ if formed by $n = \vert H\vert $ vertices, where $n_E = \vert H_E\vert $ and $n_M = \vert H_M\vert $, such that $n = n_E + n_M$.

When $H = G$, Eq.~\eqref{eq:partial} can be rewritten as:
\begin{equation}\label{eq:full}
    \mathcal{C}(G, u) = 2\sum_{v \in G} d(v, u) = 2 \sum_{r_v=1}^{R}  \sum_{c_v=1}^{C} d((r_v,c_v), u)
\end{equation}
with $v=(r_v, c_v) \in G$, and $r_v \in [1, R]$ and $c_v \in [1, C]$.
We refer to the scenario as a \textit{partial-grid scenario} (respectively, \textit{full-grid scenario}) when $H \subset G$ (respectively, $H = G$).
Finally, given $H \subseteq G$, we define the DP (median) \copt as:
\begin{equation}\label{eq:goal}
    \copt = \argmin_{u=(r_u,c_u) \in G} \mathcal{C}(H, u)
\end{equation}

\section{Solving \prob with Full-Grid Scenario}\label{sec:full}
In this section we give properties and then devise algorithms for solving \prob for the full-grid scenario.
This full-grid scenario is justified by the fact that a delivery company has to consider all the grid's locations as potential customers.
For example, in the case of a quarantined area, as with COVID-19, the drone can be used to deliver goods of primary necessity to all the residents in the area.
Therefore, the objective would be to find the optimal location to set the DP in order to minimize the travel distances between any customer's locations and the DP.

In the following, we first discuss how to optimally solve \prob with a full-grid (see Section~\ref{subsec:algoptf}), and then we propose an approximation algorithm, \algcmallf, that operates as if the grid is a full Manhattan grid (i.e., $K=0$) and provides a guaranteed approximation bound of $\sqrt{2}$ (see Section~\ref{subsec:algcmallf}).
Then we propose two heuristics: (1) \algcembf which assumes that all the Manhattan destinations move to the border $B$, i.e., the grid is a fully Euclidean grid (see Section~\ref{subsec:algcembf}), and (2) \algcmebf which does the opposite, i.e., the grid is a fully Manhattan grid (see Section~\ref{subsec:algcmebf}).
In Table~\ref{tab:comp_algs_full} we compare the presented algorithms that solve \prob in the full-grid scenario evaluating their time complexities and guaranteed approximation bounds.
\begin{table}[ht]
    \caption{Comparison between the algorithms that solve \prob in the full-grid scenario.}
    \label{tab:comp_algs_full}
    \centering
    \begin{tabular}{c|cccc}
        Point & Algorithm & Section & Time complexity & Approximation ratio \\
        \hline
        \copt & \algoptf & \ref{subsec:algoptf} & $\mathcal{O}(\log K)$ & $1$ \\
        \cmall & \algcmallf & \ref{subsec:algcmallf} & $\mathcal{O}(1)$  & $\sqrt{2}$ \\
        \cemb & \algcembf & \ref{subsec:algcembf} & $\mathcal{O}(1)$  & $-$ \\
        \cmeb & \algcmebf & \ref{subsec:algcmebf} & $\mathcal{O}(1)$  & $-$
    \end{tabular}
\end{table}

\subsection{Properties}\label{sec:full:properties}
In the following we prove some properties that we will exploit to devise our optimal algorithm.
We first note that with a full Manhattan grid (i.e., $K=1$) or a full Euclidean grid (i.e., $K=C$), \prob can be trivially solved.
In the former case, \prob has the Manhattan-median in $\copt=(\hr,\hc)$ (see~\cite{yamaguchi1987some}).
Note that, the median is not unique when the values $C$ and $R$ are even.
In the latter case, by using symmetry arguments, it can be proven that \prob has the Euclidean-median in $\copt=(\hr,\hc)$. 
For the general case when $1 < K < C$, we can derive properties to narrow down the set of possible median point candidates.

First, we observe that the median always belongs to the middle row $\hr$ of $G$. 
\begin{theorem}\label{thm:row-center}
Given an \EM $G=(R,C,K)$, the median $\copt=(\rowopt,\colopt)$ satisfies $\rowopt=\hr$.
\end{theorem}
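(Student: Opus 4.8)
The plan is to show that, for any candidate DP $u=(r_u,c_u)$ with $r_u \neq \hr$, replacing $r_u$ by $\hr$ (keeping $c_u$ fixed) does not increase the cost $\mathcal{C}(G,u)$; hence some optimal DP lies on row $\hr$. Because the cost function decomposes column-by-column over $G$, it suffices to argue the improvement one column at a time: for a fixed destination column index $c_v$, consider the contribution $f(r_u) = \sum_{r_v=1}^{R} d((r_v,c_v),(r_u,c_u))$ and show it is minimized (over integer $r_u \in [1,R]$) at $r_u = \hr$. Summing over all $C$ columns then gives the result.

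The key observation is that, by Eq.~\eqref{eq:distance} and Lemma~\ref{lemma:path-between-set}, for each of the four cases the distance $d((r_v,c_v),(r_u,c_u))$, as a function of the row coordinates, is of the form $g(|r_v - r_u|) + (\text{a term not depending on } r_u)$, where $g$ is a nondecreasing convex function of the row-difference: if the column $c_v$ is on the same side as $u$, the row part is either $\sqrt{(r_v-r_u)^2 + j^2}$ (Euclidean) or $|r_v - r_u|$ (Manhattan) with $j$ the column distance; if $c_v$ and $u$ are on opposite sides, the path goes through the border vertex $h$ sharing the row of the destination in $M$, and the row-dependent part is again $\sqrt{(r_v - r_u)^2 + j^2}$ or $|r_v - r_u|$ for the appropriate $j$, since the Manhattan leg contributes $|r_v - r_u|$ and the Euclidean leg's row displacement is also $|r_v - r_u|$. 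In every case, $f(r_u) = \sum_{r_v=1}^R g(|r_v - r_u|) + \text{const}$ with $g$ convex and nondecreasing on $[0,\infty)$.

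Now the problem reduces to a standard one-dimensional fact: for a convex nondecreasing $g$, the sum $\sum_{r_v=1}^{R} g(|r_v - r_u|)$ over $r_u \in \{1,\dots,R\}$ is minimized at the middle index $\hr = \lceil R/2 \rceil$ (and also at $\lfloor R/2 \rfloor + 1$ when $R$ is even, which is why the theorem only claims an optimal median on this row, consistent with the non-uniqueness remark for even $R$). I would prove this by a swapping/pairing argument: for $r_u < \hr$, pair the term for $r_v$ with the term for its mirror $R+1-r_v$; moving $r_u$ one step toward $\hr$ changes the paired sum by $\big(g(k+1)-g(k)\big) - \big(g(\ell)-g(\ell-1)\big)$-type increments that are nonpositive by convexity, because the distances on the ``far'' side shrink at least as fast as those on the ``near'' side grow. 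A cleaner route is to note $h(r_u) := f(r_u+1) - f(r_u)$ is nondecreasing in $r_u$ (convexity of each $g$) and changes sign exactly around $\hr$, so $f$ is unimodal with minimizer $\hr$.

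The main obstacle is handling the cross-side case cleanly: when $u \in E$ but the destination column lies in $M$ (or vice versa), one must be careful that the row-dependent part of the distance is genuinely a function of $|r_v - r_u|$ alone and that the additive ``constant'' (the part from the fixed leg and the column offsets) truly does not depend on $r_u$. Lemma~\ref{lemma:path-between-set} is exactly what makes this work — it pins the path through $h=(r_v,K)$, so the Manhattan portion contributes $|r_v - r_u|$ in its row component while the Euclidean portion, from $u$ to $h$, has vertical displacement $|r_v - r_u|$ and horizontal displacement $K - c_u$ independent of $r_u$. Once this per-column convexity-in-$|r_v-r_u|$ structure is established uniformly across all four cases, summing over columns and invoking the one-dimensional minimization lemma finishes the proof.
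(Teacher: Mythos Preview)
Your approach is correct and shares the paper's core idea: the full grid is symmetric about row $\hr$, and each destination's distance to $u$ depends on $r_u$ only through a nondecreasing function of $|r_v-r_u|$, so the total cost is minimized at the middle row. The paper packages this via a \emph{row-cost} $\Gamma(u,h)$ (the cost to serve an entire destination row at vertical offset $h$ from $u$), notes that $\Gamma$ is increasing in $h$, and then pairs rows above and below to show $\mathcal{C}(G,(\hr,c_u)) \le \mathcal{C}(G,(\ell,c_u))$ directly. You instead decompose column-by-column and invoke convexity of $t \mapsto g(|t|)$ to get unimodality; this is a finer decomposition and a slightly stronger hypothesis (convexity rather than mere monotonicity suffices for the paper's pairing), but the two arguments are essentially equivalent.

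One small slip to fix: in the cross-side case $u \in E$, $v \in M$, the Manhattan leg runs from $h=(r_v,K)$ to $v=(r_v,c_v)$ and hence has row component $0$, not $|r_v-r_u|$; the entire row dependence sits in the Euclidean leg $d_E(u,h)=\sqrt{(r_v-r_u)^2+(K-c_u)^2}$. (Symmetrically, for $u \in M$, $v \in E$, the Manhattan leg from $u$ to $(r_u,K)$ again has zero row component.) Your conclusion --- that in every case the row-dependent part is a convex nondecreasing function of $|r_v-r_u|$ plus a term independent of $r_u$ --- is nonetheless correct, so the proof goes through once this description is corrected.
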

\begin{proof}[Proof]
Let the \textit{row-cost} $\Gamma(u,h)$ be the distance traversed by a drone with DP $u=(x,y)$ to serve all the vertices on a row at distance $h$ from $u$. 
Note that the function $\Gamma$ depends only on the relative distance between the $x$-coordinate of $u$ and the row considered.
There are potentially two rows at distance $h$ from $u$: one above $u$ at row $x +h$, and one below $u$ at row $x -h$.
The crucial observation is that the two rows have exactly the same row cost when served by $u$.
For a fixed $u$, $\Gamma(u,h)$ increases with $h$.
In other words, for $h_2 > h_1$, it holds that $\Gamma(u,h_2) - \Gamma(u,h_1) > 0$.

Now we can show that \copt belongs to row \hr, proving that for a given $u=(\hr, j)$ and $v=(\ell, j)$, with $\ell \neq \hr$, it holds $\mathcal{C}(G, u) \leq \mathcal{C}(G, v)$.
First, we consider $\ell > \hr$.
In this case:
\begin{align}
    &\mathcal{C} (G, u) = 2 \sum_{x=1}^{\hr-1} \Gamma(u, x) + \Gamma(u, 0) \\
    &\mathcal{C} (G, v) = \sum_{x=1}^{\ell-1} \Gamma(v, x) + \sum_{x=1}^{n-\ell} \Gamma(v, x) + \Gamma(v, 0) \\
    \intertext{Subtracting $\mathcal{C} (G, u)$ from $\mathcal{C} (G, v)$, one has:}
    &\sum_{x=\hr}^{\ell-1} \Gamma(u, x) - \sum_{z=n-\ell+1}^{\hr-1} \Gamma(u, z) \ge 0 \label{eq:row-position-median} 
\end{align}
In Eq.~\eqref{eq:row-position-median} $x \ge \hr$, while $z < \hr$, and accordingly it holds $\Gamma(u, x) \ge \Gamma(u, z)$. 
Hence $\mathcal{C}(G, u) \leq \mathcal{C}(G, v)$ for any $v$.
Similarly, the result can be proven when $\ell < \hr$.
\end{proof}

Recall the notion of \textit{column-cost} defined in Section~\ref{subsec:colcost}. Recall, also, that $n = \vert G\vert $. Algebraically the following properties can be proven about the column-cost:

\begin{lemma}\label{lemma:properties-column-cost}\ 

\begin{enumerate}
    \item Both $\DE{j}$ and $\DM{j}$ increase with $j$;\label{enum:prop-1}
    \item $\DM{j} -\DM{j-t}=t\cdot n$ for $j \ge t$;
    \item $\DE{j} - \DE{j-t}$ is positive and strictly monotone increasing with $j$ for all $j > t$; \label{enum:prop-3}
    \item $\DE{j} - \DE{j-t} \ge t(\DE{j-t+1} - \DE{j-t})$, $\DE{j} - \DE{j-t} \le t(\DE{j} - \DE{j-1})$;\label{enum:prop-5}
    \item $\DE{j} < \DM{j} \le \sqrt{2}\DE{j}$.
\end{enumerate}
\end{lemma}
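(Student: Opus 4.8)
The plan is to verify the five items in turn directly from the closed forms \eqref{eq:column-cost-E} and \eqref{eq:column-cost-M}, observing that items~3 and~4 both reduce to one structural fact: viewed as a function of $j$, the Euclidean column-cost $\DE{\cdot}$ is convex with strictly increasing consecutive increments.

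I start with the routine items. For item~1: $\DM{j}$ is an affine function of $j$ with positive slope, hence strictly increasing; and $\DE{j}$ is a sum of the maps $j \mapsto \sqrt{i^2+j^2}$ (each strictly increasing for $j \ge 0$) together with the linear term $j$, hence strictly increasing. Item~2 is just the statement that $\DM{\cdot}$ is affine with slope $n$, which one reads off directly from \eqref{eq:column-cost-M}. For item~5 I compare $\DE{j}$ and $\DM{j}$ summand by summand: the contribution of a row at vertical offset $a$ to the column at horizontal distance $j$ is $\sqrt{a^2+j^2}$ on the Euclidean side versus $a+j$ on the Manhattan side, and $\sqrt{a^2+j^2} \le a+j \le \sqrt{2}\,\sqrt{a^2+j^2}$ because $a^2+j^2 \le (a+j)^2$ and $(a+j)^2 \le 2(a^2+j^2)$. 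Summing over all rows yields $\DE{j} \le \DM{j} \le \sqrt{2}\,\DE{j}$; the left inequality is strict whenever some offset has both $a \ne 0$ and $j \ne 0$, which holds for any non-degenerate grid ($R \ge 2$) at distance $j \ge 1$.

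The core is items~3 and~4. Put $\delta(m) := \DE{m+1} - \DE{m}$; the key claim is that $\delta$ is positive and \emph{strictly increasing} in $m$. Positivity is item~1. For monotonicity I would use that each summand $j \mapsto \sqrt{i^2+j^2}$ is convex --- either by the second derivative $\frac{d^2}{dj^2}\sqrt{i^2+j^2} = \frac{i^2}{(i^2+j^2)^{3/2}} \ge 0$, or directly by checking the second difference $\sqrt{i^2+(j+1)^2} - 2\sqrt{i^2+j^2} + \sqrt{i^2+(j-1)^2} \ge 0$ --- while the linear term $j$ contributes $0$; hence $\DE{\cdot}$ is convex and $\delta$ is nondecreasing, strictly so since at least one summand (any with $i \ge 1$) is strictly convex on the relevant range. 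Granting this, write $\DE{j} - \DE{j-t} = \sum_{m=j-t}^{j-1} \delta(m)$, a sum of $\delta$ at $t$ consecutive arguments. Item~3 follows because $j \mapsto j+1$ replaces each $\delta(m)$ by the strictly larger $\delta(m+1)$, so the sum strictly increases, and it is positive as a sum of positive terms. Item~4 follows because among these $t$ summands the smallest is $\delta(j-t) = \DE{j-t+1}-\DE{j-t}$ and the largest is $\delta(j-1) = \DE{j}-\DE{j-1}$, so $t\,(\DE{j-t+1}-\DE{j-t}) \le \DE{j}-\DE{j-t} \le t\,(\DE{j}-\DE{j-1})$.

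The only genuine obstacle is the strict convexity of $\DE{\cdot}$ underpinning items~3--4; the rest is bookkeeping. The mild subtlety is keeping the argument discrete: I prefer to phrase convexity as nonnegativity of the second differences $\DE{j+1}-2\DE{j}+\DE{j-1}$, proved summand by summand, rather than passing through a continuous variable. The degenerate cases $R=1$ and $j=0$ (where in fact $\DE{j}=\DM{j}$) are excluded implicitly, the lemma being invoked only for non-degenerate grids and $j \ge 1$.
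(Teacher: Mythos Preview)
Your proof is correct and follows essentially the same approach as the paper: termwise monotonicity for item~1, affine structure for item~2, Cauchy--Schwarz--type comparison $\sqrt{a^2+b^2}\le a+b\le\sqrt{2}\sqrt{a^2+b^2}$ for item~5, and convexity of $\DE{\cdot}$ plus the telescoping decomposition $\DE{j}-\DE{j-t}=\sum_{m=j-t}^{j-1}\delta(m)$ for items~3--4. The only cosmetic difference is that the paper establishes the monotonicity of the increments in item~3 by differentiating $\sqrt{i^2+j^2}-\sqrt{i^2+(j-t)^2}$ in the continuous variable $j$, whereas you phrase it via discrete second differences (or the second derivative of each summand); your version is slightly cleaner for a lemma about integer column indices, but the content is the same.
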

\begin{proof}
We prove each point separately:
\begin{enumerate}
    \item Let $j_1 > j_2$.
    From Eq.~\eqref{eq:column-cost-E}, it holds $\sqrt{i^2+j_1^2} > \sqrt{i^2+j_2^2}$.
    From Eq.~\eqref{eq:column-cost-M}, it holds $i+j_1 > i+j_2$.
    
    \item From Eq.~\eqref{eq:column-cost-M}, it holds $t + 2(\hr-1)t = tn$. 
    
    \item From Eq.~\eqref{eq:column-cost-E}, it holds $\sqrt{i^2 + j^2} > \sqrt{i^2 + (j-t)^2}$.
    Now, differentiating with respect to $j$ one obtains: 
    $${1}/{\sqrt{\left({i}/{j}\right)^2 + 1}} - {1}/{\sqrt{\left({i}/{j-t}\right)^2 + 1}}.$$
    Since $t > 0$, $j > j - t$, and hence $\frac{i}{j - t} > \frac{i}{j}$, confirming the strictly monotone increase for $j > t$.
    
    \item Observe that $\DE{j} -\DE{j-t}$ can be rewritten as the sum of the distance of consecutive columns as $\sum_{z=0}^{t-1}\DE{j-z}-\DE{j-z-1}$.
    By applying Property~\ref{enum:prop-3} with $t=1$, it holds $\DE{j} -\DE{j-t} \ge t (\DE{j-t+1} -\DE{j-t})$ because $\DE{j-z}-\DE{j-z-1}$ is increasing with $j-z$.
    Similarly, it follows: $\DE{j} -\DE{j-t} \le t( \DE{j} -\DE{j-1})$.
    
    \item Recalling the well-known Cauchy-Schwarz inequality $\sqrt{a^2+b^2} < (a+b) \le \sqrt{2} \sqrt{a^2+b^2}$, it holds: $\DE{j} < \DM{j} \le \sqrt{2}\DE{j}$.
    \popQED
\let\qed\relax
\end{enumerate}
\end{proof}

Having established in Theorem~\ref{thm:row-center} that $u^*$ is on the row $\hr$, the potential candidates for the median are vertices $(\hr, c)$. 
For a vertex that belongs to the middle row, say $(\hr, c)$, let $\costbar(c) = \mathcal{C}(G, c)$.
Selecting an arbitrary vertex $u=(\hr,c_u)$ as the candidate median, we can exploit the column-cost definition and decompose the cost $\costbar(c_u)$ into $C_1, \ldots, C_4$ as defined in Eqs.~\eqref{eq:cost-mixed_a} and~\eqref{eq:cost-mixed_b}.
Both equations coincide when $c_u=K$ because $\DM{0}=\DE{0}$ and $\DM{0}+iR=\DM{i}$. 
\begin{subnumcases}
    {\costbar(c_u)}=
    \overbrace{
    \underbrace{\sum\limits_{j=0}^{c_u-1} \DE{j} + \sum\limits_{j=1}^{K-c_u} \DE{j}}_{C_1=\text{cost($E$)}}
    +
    \underbrace{(C-K)\DE{K-c_u} + R\sum\limits_{j=1}^{C-K} j}_{C_2=\text{cost($M$)}}
    }^{\text{if}~ c_u \in E} \label{eq:cost-mixed_a}
    \\
    \overbrace{
    \underbrace{\sum\limits_{j=1}^{K} \DE{j} + R(K-1)(c_u-K)}_{C_3=\text{cost($E-B$)}}
    +
    \underbrace{\sum\limits_{j = 0}^{c_u-K} \DM{j} + \sum\limits_{j = 1}^{C-c_u}\DM{j}}_{C_4=\text{cost($M \cup B$)}}
    }^{\text{if}~ c_u \in M \cup B} \label{eq:cost-mixed_b}
\end{subnumcases}

We now prove some technical results to help to further reduce the set of median candidates.
We first show, in Lemma~\ref{lemma:reduce}, that the median cannot be ``too close'' to the left border of $G$.

\begin{lemma}\label{lemma:reduce}
The column $\colopt$ of the DP $\copt=(\hr, \colopt)$ of $G=(R,C,K)$ cannot be in the interval $[1, \ldots, \hk -1]$, where $\hk = \lceil \frac{K}{2} \rceil$ is the column that halves the Euclidean sub-grid.
\end{lemma}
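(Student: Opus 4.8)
The plan is to use the cost decomposition along the middle row from Eq.~\eqref{eq:cost-mixed_a} and to show that moving a candidate median one column to the right, as long as we stay inside $[1,\hk]$, strictly decreases the cost; hence no vertex with column in $[1,\hk-1]$ can be optimal. By Theorem~\ref{thm:row-center} the median already lies on row $\hr$, so it suffices to compare $\costbar(c)$ with $\costbar(c+1)$ for each $c\in[1,\hk-1]$. Both $(\hr,c)$ and $(\hr,c+1)$ lie in $E$, since $c+1\le\hk\le K$, so Eq.~\eqref{eq:cost-mixed_a} governs both costs, and the term $R\sum_{j=1}^{C-K}j$, which does not depend on $c$, cancels in the difference.

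Next I would telescope the two remaining sums. A short computation gives
$$\costbar(c)-\costbar(c+1)=\bigl(\DE{K-c}-\DE{c}\bigr)+(C-K)\bigl(\DE{K-c}-\DE{K-c-1}\bigr),$$
and the goal becomes to show this quantity is strictly positive. For the first bracket, $c\le\hk-1=\lceil K/2\rceil-1$ forces $K-c\ge\lfloor K/2\rfloor+1\ge\hk>c$, so $\DE{K-c}>\DE{c}$ by Property~\ref{enum:prop-1} of Lemma~\ref{lemma:properties-column-cost}; this already makes the first bracket positive. For the second bracket the same index bound gives $K-c-1\ge c\ge 1$, so $\DE{K-c}-\DE{K-c-1}>0$ by Property~\ref{enum:prop-3} (positivity of consecutive column-cost increments), while $C-K\ge 0$. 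Therefore $\costbar(c)>\costbar(c+1)$, so $(\hr,c)$ is strictly dominated by $(\hr,c+1)$ and cannot equal $\copt$; since $c$ ranged over all of $[1,\hk-1]$, this proves the claim.

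The only delicate step, and the one I would be most careful with, is the floor/ceiling bookkeeping: verifying that $c\le\lceil K/2\rceil-1$ really yields both $K-c>c$ and $K-c-1\ge 1$ for every admissible $c$, including the boundary value $c=\hk-1$ and small $K$ (for $K\le 2$ the interval $[1,\hk-1]$ is empty and there is nothing to prove). Once these elementary inequalities are pinned down, the positivity of both brackets is an immediate consequence of Lemma~\ref{lemma:properties-column-cost}, and in fact the argument establishes the slightly stronger fact that $\costbar$ is strictly decreasing on $[1,\hk]$.
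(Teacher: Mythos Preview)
Your proof is correct. Both you and the paper work from the decomposition in Eq.~\eqref{eq:cost-mixed_a} and rely on the monotonicity of $\DE{j}$ from Lemma~\ref{lemma:properties-column-cost}, so the approaches are close in spirit. The difference is tactical: the paper compares $\costbar(c_u)$ directly to $\costbar(\hk)$, arguing that $C_1$ is minimized at $\hk$ because $\hk$ is the median of the purely Euclidean sub-grid $G'=(R,K,K)$, and separately that $(C-K)\DE{K-c_u}$ decreases as $c_u$ grows. You instead telescope and show $\costbar(c)>\costbar(c+1)$ for every $c\in[1,\hk-1]$, which is more self-contained (it does not invoke the Euclidean-median fact for $G'$) and yields the slightly stronger conclusion that $\costbar$ is strictly decreasing on all of $[1,\hk]$. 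Your floor/ceiling check is the right place to be careful, and your handling of it (including the vacuous case $K\le 2$) is fine; note also that the positivity of the second bracket already follows from Property~\ref{enum:prop-1} alone, so you do not strictly need Property~\ref{enum:prop-3} there.
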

\begin{proof}
This is equivalent to saying that, if $c_u \in [ 1, \hk -1 ]$, then $\costbar(c_u) > \costbar(\left\lfloor \frac{K}{2} \right\rfloor)$.

From Eq.~\eqref{eq:cost-mixed_a}, we notice that the cost $C_1$ increases if $c_u < \hk$ because $c_u$ is a sub-optimal solution for the median of the Euclidean sub-grid in $G$.
Namely, $\hk$ is the median of an \EM $G'=(R,K,K)$.
Moreover, by Lemma~\ref{lemma:properties-column-cost} (Property~\ref{enum:prop-3}), the cost $(C-K)\DE{K-c_u} > (C-K)\DE{\hk}$ because $K-c_u > \hk$.
\end{proof}

Next, we show, in Lemma~\ref{lemma:unimodal} that $\costbar(c_u)$ is convex when $c_u$ varies from $\hk$ to $K$.

\begin{lemma}\label{lemma:unimodal}
Varying $c_u$ from $\hk$ to $K$, the delivery cost function $\costbar(c_u)$ has a single minimum.
\end{lemma}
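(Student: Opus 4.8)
The plan is to prove that, on the integer interval $c_u\in[\hk,K]$, the cost $\costbar(c_u)$ is a strictly convex function of $c_u$; a strictly convex integer sequence is unimodal and therefore has a single minimum, which is exactly the assertion. Since every column $c_u$ with $\hk\le c_u\le K$ lies in $E\cup B$, only Eq.~\eqref{eq:cost-mixed_a} is relevant on this range, so the piecewise nature of $\costbar$ poses no difficulty. The structural input I would extract from Lemma~\ref{lemma:properties-column-cost} is that $j\mapsto\DE{j}$ is strictly increasing (Property~\ref{enum:prop-1}) and that its forward differences $\DE{j}-\DE{j-1}$ are increasing in $j$ (Property~\ref{enum:prop-3} with $t=1$); equivalently, $\DE{\cdot}$ is a strictly increasing, convex sequence. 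If the boundary index $j=1$ in Property~\ref{enum:prop-3} is a concern, convexity of $\DE{\cdot}$ on all of $\{0,1,2,\dots\}$ can instead be read off Eq.~\eqref{eq:column-cost-E}, since each summand $\sqrt{i^2+j^2}$ is convex in $j$.

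Next I would discard the additive constant $R\sum_{j=1}^{C-K}j$ from Eq.~\eqref{eq:cost-mixed_a} and write $\costbar(c_u)=g_1(c_u)+g_2(c_u)+g_3(c_u)+\mathrm{const}$, where $g_1(c_u)=\sum_{j=0}^{c_u-1}\DE{j}$, $g_2(c_u)=\sum_{j=1}^{K-c_u}\DE{j}$, and $g_3(c_u)=(C-K)\DE{K-c_u}$, and check that each piece is convex in $c_u$. For $g_1$, the forward difference is $g_1(c_u+1)-g_1(c_u)=\DE{c_u}$, which increases with $c_u$, so $g_1$ is strictly convex. For $g_2$, the forward difference is $-\DE{K-c_u}$; as $c_u$ grows the index $K-c_u$ decreases and, $\DE{\cdot}$ being increasing, $-\DE{K-c_u}$ is non-decreasing, so $g_2$ is convex. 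For $g_3$, the map $c_u\mapsto K-c_u$ is affine and $\DE{\cdot}$ is convex, so $g_3$ is a non-negative multiple of a convex sequence (its second difference equals $(C-K)\big[(\DE{K-c_u+1}-\DE{K-c_u})-(\DE{K-c_u}-\DE{K-c_u-1})\big]\ge 0$), hence convex. A sum of convex functions is convex, and $g_1$ is strictly convex, so $\costbar$ is strictly convex on $[\hk,K]$ and thus has a single minimum there.

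I do not anticipate a genuine obstacle; the work is routine difference bookkeeping. The two points that demand care are (i) applying the monotonicity of $\DE{\cdot}$ in the correct direction for the ``reflected'' index $K-c_u$ appearing in $g_2$ and $g_3$, and (ii) the small-index situations near $c_u=K$, where $\DE{0}$ (and, in second differences, indices down to $0$) appear and where Property~\ref{enum:prop-3}'s hypothesis $j>1$ is tight --- both handled by invoking convexity of $\DE{\cdot}$ directly from Eq.~\eqref{eq:column-cost-E}. Finally, the degenerate case $K=C$ (so $C-K=0$ and $g_2=g_3=0$) still yields strict convexity through $g_1$ alone, so nothing special is needed there.
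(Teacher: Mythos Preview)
Your proposal is correct and takes essentially the same approach as the paper: both arguments establish that the forward differences of $\costbar$ are non-decreasing on $[\hk,K]$ (i.e., discrete convexity), reducing to Properties~\ref{enum:prop-1} and~\ref{enum:prop-3} of Lemma~\ref{lemma:properties-column-cost}. The paper phrases this as ``once $\costbar(t+1)-\costbar(t)\ge 0$, then $\costbar(t+2)-\costbar(t+1)\ge\costbar(t+1)-\costbar(t)\ge 0$'' via a single direct computation, whereas you package the same inequality through the decomposition $g_1+g_2+g_3$; the underlying algebra is identical.
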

\begin{proof}
Let $t$, with $\hk \le t \le K-1$, be the vertex where the delivery cost assumes the first minimum. 
Therefore from Eq.~\eqref{eq:cost-mixed_a} we have that:
\begin{align}
\costbar(t+1) - \costbar(t) &= \DE{t+1} - \DE{K-t} \nonumber \\ 
&+(C-K)(\DE{K-(t+1)} - \DE{K-t} ) \ge 0
\end{align}

To prove that  the cost function has exactly one minimum in the interval $[\hk, \ldots, K]$, it is sufficient to show that $\costbar(t+2)-\costbar(t+1) \ge 0$ given that $\costbar(t+1)-\costbar(t) \ge 0$.

First, observe by Lemma~\ref{lemma:properties-column-cost} (Property~\ref{enum:prop-3}) that:
\begin{equation}
\DE{K-(t+2)} - \DE{K-(t+1)} \ge \DE{K-(t+1)} - \DE{K-t}
\end{equation}

Then,
\begin{align}
    \costbar(t+2)-\costbar(t+1) &= \DE{t+2} -\DE{K-(t+1)} \nonumber \\
    &+ (C-K)( \DE{K-(t+2)} - \DE{K-(t+1)} ) \nonumber \\
    &\ge \DE{t+2}-\DE{K-(t+1)} \nonumber \\
    &+ (C-K)( \DE{K-(t+1)} - \DE{K-t} ) \nonumber \\
    &= \underbrace{\DE{t+2} -\DE{t+1}}_{\ge 0} + \underbrace{\DE{K-t} - \DE{K-(t+1)}}_{\ge 0} \\
    & + \costbar(t+1)-\costbar(t) \ge \costbar(t+1)-\costbar(t) \ge 0 
\end{align}
\end{proof}

So, we know now that the optimal column cannot be in the interval $[1, \ldots, \hk-1]$, and that $\costbar(c_u)$ has a single minimum in the interval $[\hk, K]$.
We can have two cases here: $K \le \hc$ and $K > \hc$.
Lemma~\ref{lemma:unimodal-in-M} shows that when $K \le \hc$, there is only one vertex candidate as the median on the Manhattan side, while Lemma~\ref{lemma:nogreatercm} establishes the fact that when $K > \hc$, the median cannot be in $[\hc, C]$.

\begin{lemma}\label{lemma:unimodal-in-M}
Let $K \le \hc$. 
Varying $c_u$ in the Manhattan side, $K \le c_u \le C$, the delivery cost function $\costbar(c_u)$ has a single minimum in $\hc$.
\end{lemma}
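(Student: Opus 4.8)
The plan is to analyze the forward difference $\costbar(c_u+1)-\costbar(c_u)$ of the delivery cost as the candidate moves one column to the right along the Manhattan side, i.e.\ for $K\le c_u\le C-1$. I would show this difference is strictly negative while $c_u<\hc$ and nonnegative (strictly positive once $2c_u>C$) for $c_u\ge\hc$; a function of an integer variable whose forward differences pass from negative to nonnegative exactly once is unimodal with its minimum at that sign change, which then locates the minimizer at $c_u=\hc$. The hypothesis $K\le\hc$ is needed only to ensure $\hc$ actually belongs to the admissible interval $[K,C]$ (if instead $K>\hc$, the difference stays positive throughout $[K,C]$ and the minimum over that interval sits at its left endpoint — the complementary case treated in Lemma~\ref{lemma:nogreatercm}).

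For the computation I would start from the decomposition in Eq.~\eqref{eq:cost-mixed_b}. Moving from $(\hr,c_u)$ to $(\hr,c_u+1)$: in $C_3$ only the summand $R(K-1)(c_u-K)$ depends on $c_u$ and contributes $R(K-1)$; in $C_4$ the two partial sums shift by one index and contribute $\DM{c_u+1-K}-\DM{C-c_u}$. Thus $\costbar(c_u+1)-\costbar(c_u)=R(K-1)+\DM{c_u+1-K}-\DM{C-c_u}$. Now I use the linearity of the Manhattan column-cost, $\DM{i}=\DM{0}+iR$ (Lemma~\ref{lemma:properties-column-cost}, Property~2, as also noted right after Eq.~\eqref{eq:cost-mixed_b}): the $\DM{0}$ terms cancel and $\DM{c_u+1-K}-\DM{C-c_u}=R(2c_u+1-K-C)$, and adding $R(K-1)$ the remaining $K$-dependent pieces cancel too, leaving the clean identity $\costbar(c_u+1)-\costbar(c_u)=R(2c_u-C)$. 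Alternatively, one can argue directly from Eq.~\eqref{eq:distance}: a one-column shift of the DP to the right lengthens the path to every vertex in columns $1,\dots,c_u$ by exactly one — including each Euclidean vertex, whose path merely acquires a longer Manhattan leg up to the border — and shortens the path to every vertex in columns $c_u+1,\dots,C$ by one, and there are $R$ vertices per column, giving the same expression.

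From $\costbar(c_u+1)-\costbar(c_u)=R(2c_u-C)$ the conclusion follows at once: the difference is $<0$ precisely for $c_u\le\hc-1$ and $\ge 0$ for $c_u\ge\hc$, so $\costbar$ is strictly decreasing on $\{K,\dots,\hc\}$ and nondecreasing on $\{\hc,\dots,C\}$, hence attains its minimum at $c_u=\hc$ (with the innocuous tie $\costbar(\hc)=\costbar(\hc+1)$ when $C$ is even, consistent with the non-uniqueness of the median already observed for even grids). I do not expect a genuine obstacle here: the only point requiring care is the index bookkeeping in the two partial sums of $C_4$ together with correctly retaining the $R(K-1)$ term of $C_3$, since it is precisely the cancellation of these $K$-dependent contributions that collapses the mixed-grid difference to the pure-Manhattan expression $R(2c_u-C)$.
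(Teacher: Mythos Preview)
Your proposal is correct. The paper's own proof is precisely your ``alternative'' direct argument: moving the DP one column to the right increases by one the distance to every vertex in columns $1,\dots,c_u$ and decreases by one the distance to every vertex in columns $c_u+1,\dots,C$, so the cost decreases while $c_u<\hc$ and increases thereafter. The paper states this in a single sentence and does not write out the resulting identity $\costbar(c_u+1)-\costbar(c_u)=R(2c_u-C)$ explicitly. Your primary route through the formula in Eq.~\eqref{eq:cost-mixed_b}, with the cancellation of the $K$-dependent pieces via $\DM{i}=\DM{0}+iR$, is simply a more computational derivation of the same identity; it buys you an explicit closed form for the difference (and hence the tidy observation about the tie when $C$ is even), at the cost of the index bookkeeping you flag, whereas the paper's combinatorial phrasing avoids touching Eq.~\eqref{eq:cost-mixed_b} at all.
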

\begin{proof}
Since the candidate median is on the Manhattan side, the delivery cost is expressed by Eq.~\eqref{eq:cost-mixed_b}.
Moving the candidate from $(\hr,c_u)$ to $(\hr,c_u+1)$, the vertices on the left of $c_u$ (including $c_u$ itself) increase by exactly one their distance from the median.
Whereas, the vertices on the right of the column $c_u$ decrease their distance from the median by one.
Hence, the delivery cost function decreases as long as $K \le c_u \le \hc$. 
\end{proof}

\begin{lemma}\label{lemma:nogreatercm}
Let $K>\hc$.
The value of the median candidate column cannot be greater than $\hc$.
\end{lemma}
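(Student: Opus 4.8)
The plan is to show that every candidate column $c_u$ with $c_u > \hc$ satisfies $\costbar(c_u)\ge\costbar(\hc)$, so that an optimal column can always be taken in $[1,\hc]$. Since $K\le C$ we have $\hk=\lceil K/2\rceil\le\lceil C/2\rceil=\hc$, and the hypothesis $K>\hc$ places $\hc$ strictly inside the interval $[\hk,K]$ on which Lemma~\ref{lemma:unimodal} applies. I would split the columns exceeding $\hc$ into the Euclidean/border part $\{\hc+1,\dots,K\}$ and the Manhattan part $\{K+1,\dots,C\}$ and bound each separately.

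For the Manhattan part I reuse the incremental argument in the proof of Lemma~\ref{lemma:unimodal-in-M}: for $K\le c_u\le C-1$, moving the candidate from $(\hr,c_u)$ to $(\hr,c_u+1)$ raises by exactly one the distance to every vertex in columns $1,\dots,c_u$ and lowers by exactly one the distance to every vertex in columns $c_u+1,\dots,C$ (this is valid across the border because, for a candidate in $M\cup B$, the distance to a Euclidean vertex depends on $c_u$ only through the additive term $c_u-K$, cf.\ Eq.~\eqref{eq:distance}). Hence $\costbar(c_u+1)-\costbar(c_u)=2R(2c_u-C)$, which is strictly positive once $c_u\ge K>\hc\ge C/2$. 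So $\costbar$ is strictly increasing on $[K,C]$, and every Manhattan candidate is dominated by the border candidate $c_u=K$, which lies in $[\hk,K]$.

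For the Euclidean/border part it is enough, by Lemma~\ref{lemma:unimodal} and the increment monotonicity established in its proof (once a forward difference $\costbar(t+1)-\costbar(t)$ is non-negative, all subsequent forward differences are non-negative and non-decreasing), to verify the single inequality $\costbar(\hc+1)-\costbar(\hc)\ge 0$; this forces $\costbar$ to be non-decreasing on all of $[\hc,K]$, hence $\costbar(c_u)\ge\costbar(\hc)$ for $\hc\le c_u\le K$. Using Eq.~\eqref{eq:cost-mixed_a} (legitimate since $\hc+1\le K$) one gets $\costbar(\hc+1)-\costbar(\hc)=\DE{\hc}-\DE{K-\hc}+(C-K)\bigl(\DE{K-\hc-1}-\DE{K-\hc}\bigr)$, where the second summand is $\le 0$. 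By Lemma~\ref{lemma:properties-column-cost}(part~\ref{enum:prop-5}), $\DE{\hc}-\DE{K-\hc}\ge(2\hc-K)\bigl(\DE{K-\hc+1}-\DE{K-\hc}\bigr)$, and by part~\ref{enum:prop-3} the consecutive gap $\DE{K-\hc+1}-\DE{K-\hc}$ is at least $\DE{K-\hc}-\DE{K-\hc-1}\ge 0$; since $2\hc\ge C$ gives $2\hc-K\ge C-K\ge 0$, multiplying through yields $\costbar(\hc+1)-\costbar(\hc)\ge(2\hc-C)\bigl(\DE{K-\hc}-\DE{K-\hc-1}\bigr)\ge 0$. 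Combined with the Manhattan part, no column larger than $\hc$ can beat $\costbar(\hc)$, which proves the claim.

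The main obstacle is the sign conflict inside $\costbar(\hc+1)-\costbar(\hc)$: stepping the candidate one column to the right brings all $C-K$ columns of the Manhattan block closer (the negative $(C-K)$ term) while pushing the Euclidean columns on the left farther, and one must show the latter dominates. The quantitative lever is exactly the discrete convexity of $\DE{\cdot}$ recorded in parts~\ref{enum:prop-3} and~\ref{enum:prop-5} of Lemma~\ref{lemma:properties-column-cost}, together with the arithmetic fact $2\hc\ge C$. A minor point to dispatch is the boundary case $\hc+1=K$, where the Euclidean/border part degenerates to the single column $K$ but the same inequality still goes through with $t=2\hc-K\ge 1$; the trivial full-grid situations $K=1$ and $K=C$ are excluded by the standing assumption $1<K<C$.
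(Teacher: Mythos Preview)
Your proposal is correct and follows essentially the same approach as the paper: exclude Manhattan candidates by the column-counting increment argument (as in Lemma~\ref{lemma:unimodal-in-M}), then exclude candidates in $[\hc,K]$ by showing $\costbar(\hc+1)-\costbar(\hc)\ge 0$ via Properties~\ref{enum:prop-3} and~\ref{enum:prop-5} of Lemma~\ref{lemma:properties-column-cost} and invoking the forward-difference monotonicity from Lemma~\ref{lemma:unimodal}. Your derivation of the increment $\DE{\hc}-\DE{K-\hc}+(C-K)\bigl(\DE{K-\hc-1}-\DE{K-\hc}\bigr)$ directly from Eq.~\eqref{eq:cost-mixed_a} is in fact cleaner than the paper's stated formula (which carries $\DE{\hc+1}$ and a factor $C-1-K$), and your explicit handling of the boundary case $\hc+1=K$ is a nice addition.
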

\begin{proof}
We can first exclude any candidate with $c_u \ge K$ because, as we seen in Lemma~\ref{lemma:unimodal-in-M}, the delivery cost function $\costbar(c_u)$ is increasing for $c_u \ge K$.
Moreover, in order to exclude the candidates in the range $[\hc, K]$, we first prove that $\costbar(\hc+1)>\costbar(\hc)$. 
Namely,
\begin{align}
\costbar(\hc + 1) - \costbar(\hc) &= \DE{\hc + 1} - \DE{K - \hc} \nonumber \\
&- (C-1-K )\left (\DE{K-\hc} - \DE{K-(\hc+1)}\right ) \ge 0
\end{align}

because, by Property~\ref{enum:prop-5}) and~\ref{enum:prop-3}) of Lemma~\ref{lemma:properties-column-cost}
\begin{align}
\DE{\hc+1}-\DE{K-\hc} &> \left (2\hc+1-K \right )\left(\DE{K-\hc+1}-\DE{K-\hc} \right ) \nonumber \\
&>(2\hc+1-K) \left(\DE{K-\hc}-\DE{K-\hc-1} \right ) \nonumber \\ 
&\ge(C-1-K)\left (\DE{K-\hc}-\DE{K-(\hc+1)} \right )
\end{align}

Since we have proven in Lemma~\ref{lemma:unimodal} that if the cost function is increasing in one vertex of row \hr, then it is increasing in all the vertices on its right, there are no candidates for the median greater than \hc.
\end{proof}

Then, Theorem~\ref{thm:optf-correctness} follows immediately from Theorem~\ref{thm:row-center}, and Lemmas~\ref{lemma:reduce},~\ref{lemma:unimodal-in-M}, and~\ref{lemma:nogreatercm}.
\begin{theorem}\label{thm:optf-correctness}
Suppose $\copt = (r^*, c^*)$ be the median for $G=(R,C,K)$. Let $c, \hk \leq c \leq K$ be such that
\[
\costbar(c) = \min_{i \in [\hk, K]} \{\costbar(i)\}
\]
and, let $c', \hk \leq c' \leq \hc$ be such that
\[
\costbar(c') = \min_{i \in [\hk, \hc]} \{\costbar(i)\}
\]
Then, $r^* = \hr$, and
\[
c^* = \begin{cases}
        c & \mbox{if } K \leq \hc$ \mbox{ and } $\costbar(c) < \costbar(\hc)\\
        \hc &  \mbox{if } K \leq \hc$ \mbox{ and } $\costbar(c) \geq \costbar(\hc)\\
        c' & otherwise
      \end{cases}
      \]
\qed
\end{theorem}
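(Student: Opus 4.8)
The plan is to assemble the statement directly from the four structural results already established, doing a case split on whether the border column $K$ lies weakly to the left of the middle column $\hc$. First I would invoke Theorem~\ref{thm:row-center} to pin down the row coordinate: every optimal DP sits on row $\hr$, so $r^* = \hr$ and it only remains to locate the best column, i.e., a minimizer $c^*$ of $\costbar(\cdot)$ over $c \in [1,C]$. Next I would apply Lemma~\ref{lemma:reduce} to discard all columns in $[1,\hk-1]$, reducing the search to $c^* \in [\hk, C]$ in every case.

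For the case $K \le \hc$, I would split the remaining interval as $[\hk,C] = [\hk,K] \cup [K,C]$. Over the Euclidean part $[\hk,K]$ the minimum of $\costbar$ is, by definition, $\costbar(c)$ (and Lemma~\ref{lemma:unimodal} further guarantees this part is unimodal, which is what \algoptf\ exploits to binary-search for $c$). Over the Manhattan part $[K,C]$, note $\hc \in [K,C]$, and Lemma~\ref{lemma:unimodal-in-M} says $\costbar$ decreases from $K$ up to $\hc$ and then increases, so its minimum over $[K,C]$ is $\costbar(\hc)$. Hence the global minimum over $[\hk,C]$ equals $\min\{\costbar(c),\costbar(\hc)\}$, giving $c^* = c$ when $\costbar(c) < \costbar(\hc)$ and $c^* = \hc$ when $\costbar(c) \ge \costbar(\hc)$ — exactly the first two branches, with ties broken toward $\hc$.

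For the case $K > \hc$, I would combine Lemma~\ref{lemma:reduce} with Lemma~\ref{lemma:nogreatercm}: the former rules out $[1,\hk-1]$ and the latter rules out every column strictly greater than $\hc$, so $c^* \in [\hk,\hc]$. This interval is non-empty since $\hk = \lceil K/2\rceil \le \lceil C/2\rceil = \hc$ (because $K \le C$). As $c^*$ minimizes $\costbar$ over all of $[1,C]$, it in particular minimizes it over the sub-interval $[\hk,\hc]$, so $\costbar(c^*) = \costbar(c')$ and $c'$ is an optimal column — the third branch. (Again Lemma~\ref{lemma:unimodal}, applied to $[\hk,\hc] \subseteq [\hk,K]$, shows this minimum occurs at the unique turning point.)

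The statement is essentially a bookkeeping corollary, so I do not anticipate a genuine obstacle; the only point that needs care is the boundary column $c_u = K$, which is described by both Eq.~\eqref{eq:cost-mixed_a} and Eq.~\eqref{eq:cost-mixed_b}. I would simply note, as already observed after those equations (since $\DM{0} = \DE{0}$ and $\DM{i} = \DM{0} + iR$), that the two expressions agree at $c_u = K$, so counting $K$ in both sub-intervals in the case $K \le \hc$ is harmless and does not distort the comparison $\costbar(c)$ versus $\costbar(\hc)$. Throughout, ``$c^* = \cdot$'' is to be read as ``the indicated column is an optimal choice,'' which is all that is claimed.
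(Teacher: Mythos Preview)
Your proposal is correct and follows essentially the same route as the paper, which simply states that the theorem follows immediately from Theorem~\ref{thm:row-center} and Lemmas~\ref{lemma:reduce},~\ref{lemma:unimodal-in-M}, and~\ref{lemma:nogreatercm}. Your write-up just fleshes out that one-line remark with the explicit case split and the bookkeeping at $c_u=K$; the additional invocation of Lemma~\ref{lemma:unimodal} is not needed for correctness of the statement itself (it is only used to justify the binary search in \algoptf), but it does no harm.
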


\subsection{The Optimal Algorithm \algoptf}\label{subsec:algoptf}
Theorem~\ref{thm:optf-correctness} directly translates to algorithm \algoptf (see the pseudo-code in Algorithm~\ref{alg:algoptf}) that optimally solves \prob in the full-grid scenario.

\begin{algorithm}[ht]
    \caption{The \algoptf Algorithm}
    \label{alg:algoptf}
    \begin{algorithmic}[1]
        \IF {$K \le \hc$} \label{code:algoptf:case1}
            \STATE $c \gets \texttt{find-minimum}(\hk, K)$\;\label{code:algoptf:init-case1}
            \IF {$\costbar(c) < \costbar(\hc)$}
                \STATE \RETURN{$(\hr, c)$} \COMMENT{see Lemma~\ref{lemma:unimodal}}
            \ELSE
                \STATE \RETURN{$(\hr, \hc)$} \COMMENT{see Lemma~\ref{lemma:unimodal-in-M}}
            \ENDIF
        \ELSE \label{code:algoptf:case2}
            \STATE $c \gets \texttt{find-minimum}(\hk, \hc)$\label{code:algoptf:init-case2} \COMMENT{see Lemma~\ref{lemma:nogreatercm}}
            \STATE \RETURN{$(\hr, c)$}
        \ENDIF
    \end{algorithmic}    
\end{algorithm}

Given a closed interval of column numbers, the procedure \texttt{find-minimum} returns the column number, $c$, in that interval such that the total delivery cost from the DP $(\hr, c)$ is the least over all the vertices $(\hr, i)$ for $i$ in the given closed interval.

\subsubsection{Time Complexity of Algorithm \algoptf}\label{subsubsec:algoptf-time}

Note that the minimum in $[\hk, K]$, returned by invoking the procedure \texttt{find-minimum}, can be found by applying a binary search due to the unimodality proven in Lemma~\ref{lemma:unimodal}.
Similarly, when $K > \hc$ (Line~\ref{code:algoptf:case2}), the minimum is in the interval $[\hk, \hc]$, i.e., a sub-interval of $[\hk, K]$. Thus, as above, it can be found through the \texttt{find-minimum} procedure in Line~\ref{code:algoptf:init-case2}.
The time complexity of the \texttt{find-minimum} procedure is logarithmic in the width of the sub-interval where the minimum resides.
The interval has a width of $\frac{K}{2}$ in Line~\ref{code:algoptf:init-case1}, and a width of $\frac{C}{2} - \frac{K}{2} \le K - \frac{K}{2} = \frac{K}{2}$ in Line~\ref{code:algoptf:init-case2} since $K \ge \frac{C}{2}$. 
Thus, we can conclude that, in each case, the time complexity of \texttt{find-minimum} is $\mathcal{O}(\log K)$. 

With regards to the time complexity of the \algoptf algorithm, we observe that for a fixed $c_u$, the delivery cost $\costbar(c_u)$ can be computed in $\mathcal{O}(1)$ time by applying Eq.~\eqref{eq:cost-mixed_a} if the prefix sums of the column-cost are computed in a pre-processing phase.
The column-costs $\DE{j}$ and their prefix sums $\sum_{t=1}^{j} \DE{t}$, for $1 \le j \le K$, can be computed and memorized in a vector in $\mathcal{O}(RK+K)$ time.
For $1 \le j \le K$, the $K$ column-costs $\DM{j}$ and their prefix sums $\sum_{t=1}^{j} \DM{t}$ can be computed and memorized in a vector in $\mathcal{O}(K)$ time by using the closed-form in Eq.~\eqref{eq:column-cost-M}.
Then, assuming that the prefix-sums of the column-costs are given as input to the algorithm, i.e., they are computed in a pre-processing phase, each $\costbar(c_u)$ can be computed in constant time. 
Thus, the optimal point \copt can be computed by the \algoptf algorithm in $\mathcal{O}(\log K)$ time.

\subsection{Approximation and Heuristic Algorithms}\label{subsec:approx-full-grid}

In the previous section, we presented an algorithm that optimally solves \prob in the full-grid scenario, taking logarithmic time in the number of columns of the grid. In this section, we present three constant-time algorithms, \algcmallf, \algcembf, and \algcmebf, for solving \prob in the full-grid scenario. 
We also establish an upper bound on the approximation ratio for algorithm \algcmallf. 
In Section~\ref{subsubsec:full-synthetic}, we present comparative empirical performance evaluation of these three approximation algorithms using synthetic data.

\subsubsection{Algorithm \algcmallf}\label{subsec:algcmallf}

Algorithm \algcmallf returns the Manhattan-median $\cmall = (\hr,\hc)$ as the DP. 
That is, \algcmallf operates as if the grid is a full Manhattan grid (i.e., $K=1$)
This algorithm is sub-optimal for $1 < K < C$ providing a guaranteed approximation bound of $\sqrt{2}$, while it is optimal when $K=1$ or $K=C$.

\begin{lemma}\label{lemma:algcmallf}
The \algcmallf algorithm provides a $\sqrt{2}$ approximation ratio when $1 < K < C$.
\end{lemma}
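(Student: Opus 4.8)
The plan is to compare the cost of the Manhattan-median $\cmall=(\hr,\hc)$ on the true \EM against the cost of the true optimum $\copt$, by passing through the cost of $\copt$ evaluated on a \emph{fully Manhattan} grid. The key observation is that replacing every Euclidean column-cost $\DE{j}$ by the corresponding Manhattan column-cost $\DM{j}$ can only increase the delivery cost of any fixed candidate vertex on row $\hr$, and by Lemma~\ref{lemma:properties-column-cost} (Property~5) it increases it by a factor at most $\sqrt 2$. Concretely, for a candidate $u=(\hr,c_u)$ write $\costbar(c_u)$ as a nonnegative combination of terms $\DE{j}$ (the Euclidean columns and the $E$–$B$ block) and terms $\DM{j}$ (the Manhattan columns), as in Eqs.~\eqref{eq:cost-mixed_a}--\eqref{eq:cost-mixed_b}; let $\costbar_M(c_u)$ denote the same combination with every $\DE{j}$ replaced by $\DM{j}$. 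Then $\costbar(c_u)\le \costbar_M(c_u)\le \sqrt2\,\costbar(c_u)$ for every $c_u$, since each replaced summand individually satisfies $\DE{j}\le\DM{j}\le\sqrt2\,\DE{j}$ and the coefficients are nonnegative.

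Next I would use optimality on the two pure grids. On the fully Manhattan grid the median is exactly $(\hr,\hc)$ (by \cite{yamaguchi1987some}, as already invoked), so $\costbar_M(\hc)\le\costbar_M(\colopt)$ where $\colopt$ is the column of the true optimum $\copt$. Chaining the inequalities:
\[
\mathcal{C}(G,\cmall)=\costbar(\hc)\le \costbar_M(\hc)\le \costbar_M(\colopt)\le \sqrt2\,\costbar(\colopt)=\sqrt2\,\mathcal{C}(G,\copt),
\]
which is exactly the claimed $\sqrt2$ approximation ratio. One subtlety to check is that the decomposition of $\costbar_M(c_u)$ really is the Manhattan delivery cost of the vertex $(\hr,c_u)$ on the grid $G$ viewed with $K$ shifted so that the $E$-part behaves like Manhattan — i.e. that after the substitution the expression genuinely equals $\mathcal{C}$ computed with the Manhattan metric everywhere; this is where the remark in the text that the two branches of \eqref{eq:cost-mixed_a}--\eqref{eq:cost-mixed_b} coincide at $c_u=K$ (because $\DM{0}=\DE{0}$ and $\DM{0}+iR=\DM{i}$) does the bookkeeping, so that $\costbar_M$ is a well-defined single cost function over all of row $\hr$.

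The main obstacle I anticipate is precisely this last identification: one must verify that after the column-by-column substitution the resulting quantity is a legitimate Manhattan cost over the \emph{whole} grid (so that the known optimality of $(\hr,\hc)$ for pure Manhattan applies to it), rather than some hybrid expression for which no optimality statement is available. Handling this cleanly requires either (i) re-deriving $\costbar_M$ directly from Eq.~\eqref{eq:cost-mixed_b} type reasoning with $K$ effectively $1$, checking the row-cost and column-cost accounting matches, or (ii) arguing termwise that $\mathcal{C}(G_M,u)\le\sqrt2\,\mathcal{C}(G,u)$ and $\mathcal{C}(G,u)\le \mathcal{C}(G_M,u)$ for \emph{every} $u$ on row $\hr$ where $G_M$ is $G$ with the Euclidean region reinterpreted under the Manhattan metric, using Property~5 pointwise on distances $d(v,u)$. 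Option (ii) is the more robust route and avoids re-deriving closed forms; the only care needed there is that for $u\in E$, $v\in M$ the path decomposition of Lemma~\ref{lemma:path-between-set} still lets us bound $d(v,u)$ segmentwise by its Manhattan analogue. Everything else is the short inequality chain above.
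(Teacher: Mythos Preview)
Your proposal is correct, and the pointwise route (ii) you outline is the clean way to handle the subtlety you flag: for every pair $u,v$ one has $d(u,v)\le d_M(u,v)\le\sqrt2\,d(u,v)$ in the \EM (the mixed case $u\in E$, $v\in M$ follows because $d_M(u,v)=d_M(u,h)+d_M(h,v)$ for $h=(r_v,K)$, and then Cauchy--Schwarz bounds only the Euclidean leg). Summing over $v\in G$ gives the two inequalities you need without touching the column-cost decomposition.

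Your argument differs from the paper's. You work entirely with the all-Manhattan auxiliary cost $\mathcal{C}_M$ and chain
\[
\mathcal{C}(G,\cmall)\le \mathcal{C}_M(G,\cmall)\le \mathcal{C}_M(G,\copt)\le \sqrt2\,\mathcal{C}(G,\copt),
\]
applying the $\sqrt2$ factor at $\copt$. The paper instead introduces a second auxiliary, the all-Euclidean cost $\mathcal{C}_E$, and bounds the denominator:
\[
\mathcal{C}(G,\copt)\ge \mathcal{C}_E(G,\copt)\ge \mathcal{C}_E(G,\cmall),
\]
using that $\cmall=(\hr,\hc)$ is simultaneously the median of the full Euclidean grid; the $\sqrt2$ factor then compares $\mathcal{C}_M(G,\cmall)$ to $\mathcal{C}_E(G,\cmall)$ at the \emph{same} point. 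Your route is arguably more economical (one auxiliary grid, no appeal to the Euclidean-median fact), while the paper's route makes the strictness of the inequality transparent and keeps the Cauchy--Schwarz comparison localized at a single vertex.
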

\begin{proof}
Let $\copt$ be the median for $G=(R,C,K)$.
Since the Euclidean column-cost $\DE{j}$ is smaller than the Manhattan column-cost $\DM{j}$, $\mathcal{C}(G,\cmall) \le \mathcal{C}_M(G,\cmall)$, where $\mathcal{C}_M(G,\cmall)$ is the cost when $K=1$.
Moreover, $\mathcal{C}(G,\copt) > \mathcal{C}_E(G,\copt)$ because the Manhattan distance is at least as much as the Euclidean distance. Then, $\mathcal{C}_E(G,\copt) > \mathcal{C}_E(G,\cmall)$ because \cmall is  the Euclidean-median of $G=(R,C,C)$.
Thus, by the Cauchy-Schwarz inequality in Lemma~\ref{lemma:properties-column-cost}: 
\[
\frac{\mathcal{C}(G,\cmall)}{\mathcal{C}(G,\copt)} < \frac{\mathcal{C}_M(G,\cmall)}{\mathcal{C}_E(G,\cmall)} \le \sqrt{2}.
\]
\let\qed\relax
\end{proof}

The \algcmallf algorithm finds the point \cmall in constant time, so its time complexity is $\mathcal{O}(1)$.

\subsubsection{Algorithm \algcembf}\label{subsec:algcembf}
Algorithm \algcembf solves \prob with full-grid, and selects the DP $\cemb=(\hr, \mu)$ with 
\begin{equation}
    \mu=\frac{(\sum_{i=1}^{K}i)+(C-K)K}{C}=\frac{K(2C-K+1)}{2C}
\end{equation}
\algcembf imagines that all the Manhattan destinations move on the border $B$.
Thus, the grid becomes a full Euclidean grid $E'$ with $K$ columns, whose rightmost column has multiplicity $w_v=C-K$.
So, minimizing Eq.~\eqref{eq:cost-mixed_a} is the same as finding the Euclidean-median of $E'$.
Unfortunately, there is no closed form to compute the exact Euclidean-median for a set of points\footnote{We only know the Euclidean-median of a Euclidean grid $G=(R,C,C)$, whose columns have all multiplicity $1$.}.

Algorithm \algcembf finds the point \cemb using a constant number of operations. 

\subsubsection{Algorithm \algcmebf}\label{subsec:algcmebf}

Algorithm \algcmebf imagines that all the Euclidean destinations are moved on the border $B$.
Thus, the grid becomes a full Manhattan grid $M'$ with $C-K$ columns, whose leftmost column has multiplicity $w_v=K$.
Therefore, minimizing Eq.~\eqref{eq:cost-mixed_b} is the same as finding the Manhattan-median of the grid $M'$.
Algorithm \algcmebf solves \prob with full-grid, and selects the DP $\cmeb=(\hr, \mu)$ with $\mu = K$ if $K \ge \hc$ or with $\mu = \hc$ if $K < \hc$. In either case, $\cmeb$ is computed in constant time.

\section{Solving \prob with Partial-Grid Scenario}\label{sec:partial}

In this section, we focus on \prob in the partial-grid scenario. In this case, we are given $H \subset G$ as the $n$ delivery points, i.e., $\vert H\vert  = n$.

We first discuss the trivial cases where $K = 1$, i.e., a full Manhattan grid, and $K = C$, i.e., a full Euclidean grid (see Section~\ref{subsec:partial-trivial}). The rest of the sections deals with the general cases. 

In Sections~\ref{subsec:algcembp} and~\ref{subsec:algcmebp} we propose two heuristic algorithms, \algcembp and \algcmebp, assuming the optimal DP is in the Euclidean side of the grid, and in the Manhattan side of the grid, respectively. Note that neither of these two algorithms returns the optimal DP. Algorithm \algcembp returns the best DP candidate in $E$, whereas \algcmebp returns the best DP candidate in $M$.

Then, in Section~\ref{subsec:algoptp} we combine the above two heuristic algorithms and present algorithm \algoptp\ to find the optimal median, the DP, in $G$.

In Sections~\ref{subsec:algsuboptp} and~\ref{subsec:algcmallp} we present two algorithms, \algsuboptp\ and \algcmallp, respectively, that each compute a sub-optimal DP, but are more efficient than Algorithm \algoptp.

To summarize the above, we tabulate in Table~\ref{tab:comp_algs_partial} the presented algorithms that solve \prob in the partial-grid scenario along with their time complexities and guaranteed approximation bounds. 
In Section~\ref{subsubsec:partial-synthetic}, we present comparative empirical performance evaluation of these algorithms using synthetic data, while in Section~\ref{sec:real}, we compare these algorithms using quasi-real data.

\begin{table}[ht]
    \caption{Comparison between the algorithms that solve \prob in the partial-grid scenario.}
    \label{tab:comp_algs_partial}
    \centering
    \begin{tabular}{c|cccc}
        Point & Algorithm & Section & Time complexity & Approximation ratio \\
        \hline
        \cemb & \algcembp & \ref{subsec:algcembp} & $\mathcal{O}(nR \log K)$  & $-$ \\
        \cmeb & \algcmebp & \ref{subsec:algcmebp} & $\mathcal{O}(nR)$  & $-$ \\
        \copt & \algoptp & \ref{subsec:algoptp} & $\mathcal{O}(nR \log K)$ & $1$ \\
        \csopt & \algsuboptp & \ref{subsec:algsuboptp} & $\mathcal{O}(n \log R \log K)$  & $-$ \\
        \cmall & \algcmallp & \ref{subsec:algcmallp} & $\mathcal{O}(n)$  & $-$
    \end{tabular}
\end{table}

\subsection{$K=1$ or $K=C$}\label{subsec:partial-trivial}

We first discuss how to optimally solve \prob to serve a subset of customers on a grid that is not mixed, i.e., either Manhattan or Euclidean. 
In a Manhattan grid (i.e., an \EM with $K=1$), given a subset $H \subset G$ of customers, \prob with partial-grid scenario can be trivially solved in $\mathcal{O}(\vert H\vert )$ time by selecting $\copt=(r_{u^*},c_{u^*})$ where $r_{u^*}$ is the median of the row coordinate of the customers and $c_{u^*}$ is the median of the column coordinate of the customers (see~\cite{yamaguchi1987some}). 

In the literature, there are many results concerning the problem, called \textit{geometric median}, of determining the DP that minimizes the sum of distances between the points of a given set $H \subset \mathbb{R}^d$  and the DP itself.
The geometric median is denoted as the Euclidean-median in two-dimensional space. 
Although the Euclidean-median is unique and the sum $\mathcal{C}(H, u)$ of the distances from each customer to the DP $u$ is positive and strictly convex in $\mathbb{R}^d$, as proven by~\cite{vardi2000multivariate}, there is no exact and closed expression for the Euclidean-median of an arbitrary set $H$ of real points. 
In a Euclidean grid (i.e., an \EM with $K = C$), \prob with partial-grid scenario can be solved more easily because the candidate positions in the plane are just the vertices of $G$, and the number of attempts to determine the single Euclidean-median is limited by the fact that \EM is formed by $R$ rows and $C$ columns. 
So, a trivial solution for \prob with partial-grid takes $\mathcal{O}(RC\vert H\vert )$ time because for each position $u$ of the grid the cost $\mathcal{C}(H, u)$ can be computed in time $\mathcal{O}(\vert H\vert )$. 
Exploiting the fact that, in $\mathbb{R}^2$, $\mathcal{C}(H, u)$ is positive and strictly convex  when $u$ moves on a single row of the grid, the position in a row that provides the minimum cost $\mathcal{C}(H, u)$ can be computed by a slightly modified binary search. 
So, \prob with a subset of customers on a Euclidean grid can be solved in $\mathcal{O}(R \log C \vert H\vert )$ time because for each row of the grid only $\log C$ candidates are tested.

In the next section, for the general case, i.e., the mixed-grid with $1 < K < C$, we leverage the observation that the median \copt resides either in $M$ or in $E$.

\subsection{The \algcembp Algorithm}\label{subsec:algcembp}

As noted above, \algcembp solves \prob assuming that $u^* \in E$ and returns $\cemb \in E$. Algorithm \algcembp selects as the DP the point in the Euclidean side $E$ that returns the minimum cost.

For any DP $u \in E$, by Lemma~\ref{lemma:path-between-set}, for each point $v$ in $H_M$ (the subset of $H$ that contains the vertices in the Manhattan side), the drone must travel horizontally from the border $B$ to $v$.
So, the drone covers the same fixed cost in the Manhattan grid regardless of the position of $u \in E$. This motivates our algorithm whose pseudo-code is presented in Algorithm~\ref{alg:algcembp}. 

Similar to \algcembf for the full-grid scenario, we construct the multi-set $H_{M}'$ consisting of the projections of all the points in $H_M$ to the border $B$ (Algorithm~\ref{alg:algcembp}, Line~\ref{code:algcembp:init}). Then, we compute the point $\cemb \in E$ that minimizes the sum of the distances to the points in the multi-set $H' = H_E \cup H_{M}'$, taking into account the multiplicity of points in $H'$. 
Towards this, for each row (Algorithm~\ref{alg:algcembp}, Line~\ref{code:algcembp:row}) we evaluate the minimum from the first to the $K^{th}$ column, and update the overall minimum, if necessary.
By exploiting the fact that in the Euclidean space the function $\mathcal{C}(H, u)$ to minimize is positive and strictly convex (see~\cite{vardi2000multivariate}), we know that $\mathcal{C}(H, u)$ is unimodal when fixing a row and varying the columns.
Hence, we can calculate the minimum on each row by performing a time-efficient binary search that requires logarithmic time (Algorithm~\ref{alg:algcembp}, Line~\ref{code:algcembp:ternary}).
Since the cost of movement in $M$ does not change the minimum, the returned value, $\cemb$, is the best DP in $E$.

\begin{algorithm}[ht]
    \caption{The \algcembp Algorithm}
    \label{alg:algcembp}
    \begin{algorithmic}[1]
        \STATE $H_{M}' \gets \{(r_u, K) \in B \mid  u \in H_M\}, H' \gets H_E \cup H_{M}'$\;\label{code:algcembp:init}
        \STATE $\cemb \gets \varnothing, cost \gets + \infty$\;
        \FOR {$i \in 1, \ldots, R$} \label{code:algcembp:row}
            \STATE $\copt_i \gets \texttt{find-minimum-on-row}(i, H', 1, K)$\;\label{code:algcembp:ternary}
            \IF {$\mathcal{C}(H', \copt_i) < cost$} \label{code:algcembp:if}
                \STATE $\cemb \gets \copt_i, cost \gets \mathcal{C}(H', \copt_i)$\;
            \ENDIF
        \ENDFOR
        \STATE \RETURN{$\cemb$}
    \end{algorithmic} 
\end{algorithm}

About the time complexity, due to the fact that there are $n$ points to serve, and since we perform $R$ binary searches (one for each row of a Euclidean grid with $K$ columns), the total cost of \algcembp is $\mathcal{O}(n R \log K)$.

\subsection{The \algcmebp Algorithm}\label{subsec:algcmebp}

Algorithm \algcmebp solves \prob assuming that $\copt \in M$ and returns $\cmeb \in M$.

For any DP, $u$ in $M$, by Lemma~\ref{lemma:path-between-set}, 
the drone has to fly through the projection of $u \in B$ to serve any point of $H_E$. Thus, for any DP $u$, there is one intermediate point in $B$ for all drone paths from $u$ to points in $H_E$. Note that the intermediate point does not depend on the column $c_u$ of the DP $u$. This motivates our algorithm whose pseudo-code is presented in Algorithm~\ref{alg:algcmebp}.

We first compute the column median, $\chi$, of the points in $H'$, consisting of the points in $H_M$ and the points in $H_E$ moved to the border. The function \texttt{column-median} takes into account the possible multiplicity of points in $H'$. Since we are only concerned with the column numbers, we can move each point in $H_E$ to any point on the border; we move all of them to $(1, K)$  (Algorithm~\ref{alg:algcmebp}, Lines~\ref{code:algcmebp:H1} and~\ref{code:algcmebp:median}).

Since the median in $M$ can reside in $R$ different rows, we have $R$ candidate intermediate points, which are the points of the border $B$. For each row $i$, we
construct the multi-set $H_{E}'(i)$ consisting of all the points in $H_E$ moved to the intermediate point $(i, K)$ on the border $B$ (Algorithm~\ref{alg:algcmebp}, Line~\ref{code:algcmebp:project}). 
We then consider the point $u_i^* = (i, \chi) \in M$ as the median of  the points in the multi-set $H' = H_M \cup H_{E}'(i)$.
With $u_i^*$ as the DP, the cost of delivery is the sum of two costs: (1) the cost of flying the drone between $u_i^*$ and each point in $H'$, with all distances calculated according to the Manhattan metric, and (2) the cost of flying the drone between each point in $H_E$ and the intermediate point corresponding to $u_i^*$, with all distances calculated according to the Euclidean metric (Algorithm~\ref{alg:algcmebp}, Line~\ref{code:algcmebp:if}). 

The algorithm returns as $\cmeb$ the intermediate point that witnesses the least cost, over all the possible intermediate points, and is thus the best DP in $M$.

\begin{algorithm}[ht]
    \caption{The \algcmebp Algorithm}
    \label{alg:algcmebp}
    \begin{algorithmic}[1]
        \STATE $\cmeb \gets \varnothing, c \gets + \infty$\;
        \STATE $H_{E}'(1) \gets \{(r_u, 1) \in B \mid u \in H_E\}$, $H' \gets H_{E}'(1) \cup H_M$\;\label{code:algcmebp:H1}
        \STATE $\chi \gets \texttt{column-median} (H')$\;\label{code:algcmebp:median}
        \FOR {$i \in 1, \ldots, R$} \label{code:algcmebp:row}
            \STATE $H_{E}'(i) \gets \{(r_u, i) \in B \mid u \in H_E\}$, $H' \gets H_{E}'(i) \cup H_M$\;\label{code:algcmebp:project}
            \STATE $\copt_i \gets (i, \chi)$\;
            \STATE $C_i \gets \mathcal{C}(H', \copt_i) + cost(H_E \rightarrow (i,K))$\;
            \IF {$C_i < c$} \label{code:algcmebp:if}
                \STATE $\cmeb \gets \copt_i$\;
                \STATE $c \gets C_i$\;
            \ENDIF
        \ENDFOR
        \STATE \RETURN{$\cmeb$}
    \end{algorithmic} 
\end{algorithm}

As for the complexity of the algorithm, we compute the column median of $n$ points once. Then, for every row $i$, the cost $cost(H_E \rightarrow (i,K))$ has to be computed, and this requires $\mathcal{O}(\vert H_E \vert)$ time.
Thus, the algorithm's complexity is $\mathcal{O} (n + R\vert H_E\vert )$.

\subsection{The Optimal \algoptp Algorithm}\label{subsec:algoptp}

Having computed the best DPs on both sides of \EM we can now optimally solve \prob with partial-grid scenario.

The \algoptp algorithm finds the optimal point \copt by comparing the best point between \cemb and \cmeb.
Given that Algorithm \algcembp returns the best DP in $E$, and Algorithm \algcmebp returns the best DP in $M$, the simple idea of \algoptp is to compare these two points and return the best one, as follows:
\begin{equation}
    \copt = \argmin \{ \mathcal{C}(H,\cemb), \mathcal{C}(H,\cmeb) \}.
\end{equation}
About the time complexity, recalling that \algcembp takes $\mathcal{O}(n R \log K)$ and \algcmebp takes $\mathcal{O}(n R)$, the overall time complexity of \algoptp is $\mathcal{O}(n R \log K)$.

\subsection{The \algsuboptp Algorithm}\label{subsec:algsuboptp}

Algorithm \algsuboptp applies the same strategy as \algoptp, i.e., of comparing the best among two points: the best one in the Euclidean grid, and the best one in the Manhattan grid. Nonetheless, it uses slightly different versions of both \algcembp and \algcmebp. 

In the modified version of Algorithm \algcembp, we perform two binary searches -- one on the rows plus the one on the columns up to the $K^{th}$ column. In the modified version of Algorithm \algcmebp, we perform a single binary search on the rows. Thus, the time complexities of the two algorithms are $\mathcal{O}(n R \log K)$ and $\mathcal{O}(n R)$, respectively.

Thus, the overall time complexity of \algsuboptp is $\mathcal{O}(n \log R \log K)$. However, this strategy does not guarantee that the returned point \csopt is optimal.

\subsection{The \algcmallp Algorithm}\label{subsec:algcmallp}

Essentially, Algorithm \algcmallp ignores the border that separates the Euclidean and the Manhattan sides, and computes the DP \cmall as if $G=(R,C,1)$. The algorithm returns
\begin{equation}\label{eq:cmall}
    \cmall = ( \tilde{r}_H, \tilde{c}_H ),
\end{equation}
where $\tilde{r}_H$ and $\tilde{c}_H$ are the individual medians of the rows and the columns, respectively, of the points in $H$ (see~\cite{yamaguchi1987some}).
Note that, also in this scenario, the median is not unique if $\vert H\vert $ is even.

Although \algcmallp optimally solves \prob in the case of $G = (R, C, 1)$, it is sub-optimal in the general case. 
Nonetheless, it works in linear time (see~\cite{cormen2009introduction}) with respect to the number of points.

\section{Performance Evaluation}\label{sec:evaluation}
In this section, we empirically compare the performance of our algorithms in terms of the quality of the solution (i.e., delivery cost), and their running times, for solving \prob in both scenarios.

\subsection{Settings and Parameters}
We implemented our algorithms in Python language version 3.9, and run all the instances on an Intel i7-860 computer with $12 \unit{GB}$ of RAM.
In order to evaluate our proposed algorithms for solving \prob in both scenarios, we rely on synthetic and quasi-real delivery areas.

For the \textit{synthetic case} (Section~\ref{sec:synthetic}),
we set different layouts by varying $R, C \in \{50, \ldots, 400\}$ and $1 \le K \le C$.
Then, we compare the algorithms with respect to the optimal one, and we plot the experimental ratio $\rho = \frac{\mathcal{C}(H,\tilde{u})}{\mathcal{C}(H,\copt)} \ge 1$.
In other words, for $H \subseteq G$, $\rho$ is the ratio between the total cost for serving all the required customers from the DP  $\tilde{u}$ as returned by the tested algorithm, and the total cost of the optimal solution where the DP used is \copt.
When testing the full-grid scenario, we compare \algcmallf, \algcembf, and \algcmebf with respect to the optimal algorithm \algoptf, while when testing the partial-grid scenario, we compare \algcmallp, \algcembp, \algcmebp, and \algsuboptp with respect to the optimal algorithm \algoptp.

Moreover, in the partial-grid scenario, we uniformly generate $n=\vert H\vert $ random positions inside the grid with $n=\{5, \ldots, 100\}$, and then return the \textit{average} ratio (along with the \textit{standard deviation}) on $33$ random instances.
Also in the partial-grid scenario, given a setting with $n$ random customers, we evaluate the algorithms when balancing the quantities $n_E$ and $n_M$ with respect to a certain fraction $p=\{0, \frac{1}{3}, \frac{1}{2}, \frac{2}{3}, 1\}$ on $n$, such that $n_E = p \cdot n$ and $n_M = (1-p) \cdot n$, with $n = n_E + n_M$.

For the \textit{quasi-real case} (Section~\ref{sec:real}), we only test the more general partial-grid scenario in random instances taken from real cities, like the ones shown in Figure~\ref{img:areas}.
For these examples, we approximately extract the actual \EM from the map, and then we run our proposed algorithms.
Obviously, real cities cannot be exactly modeled as \EMs due to the fact that roads and buildings can be arbitrarily made by people.
However, we have found interesting examples and attempted to perform our algorithms on these layouts.
Clearly, in the aforementioned grids of customers, some houses or skyscrapers can be missing.

\subsection{Results with Synthetic Data}\label{sec:synthetic}

\subsubsection{Full-Grid Scenario}\label{subsubsec:full-synthetic}

We first analyze our empirical results of the performance of the algorithms with respect to the delivery costs, and then with respect to the running times.

\medskip\noindent\textit{Delivery Costs}

Figure~\ref{fig:full-three} compares the algorithms when solving \prob with the full-grid scenario reporting, for each plot, the ratio $\rho$ between the total cost of the tested algorithm and the optimal total cost.

\begin{figure}[ht]
    \centering
    \includegraphics[scale=0.9]{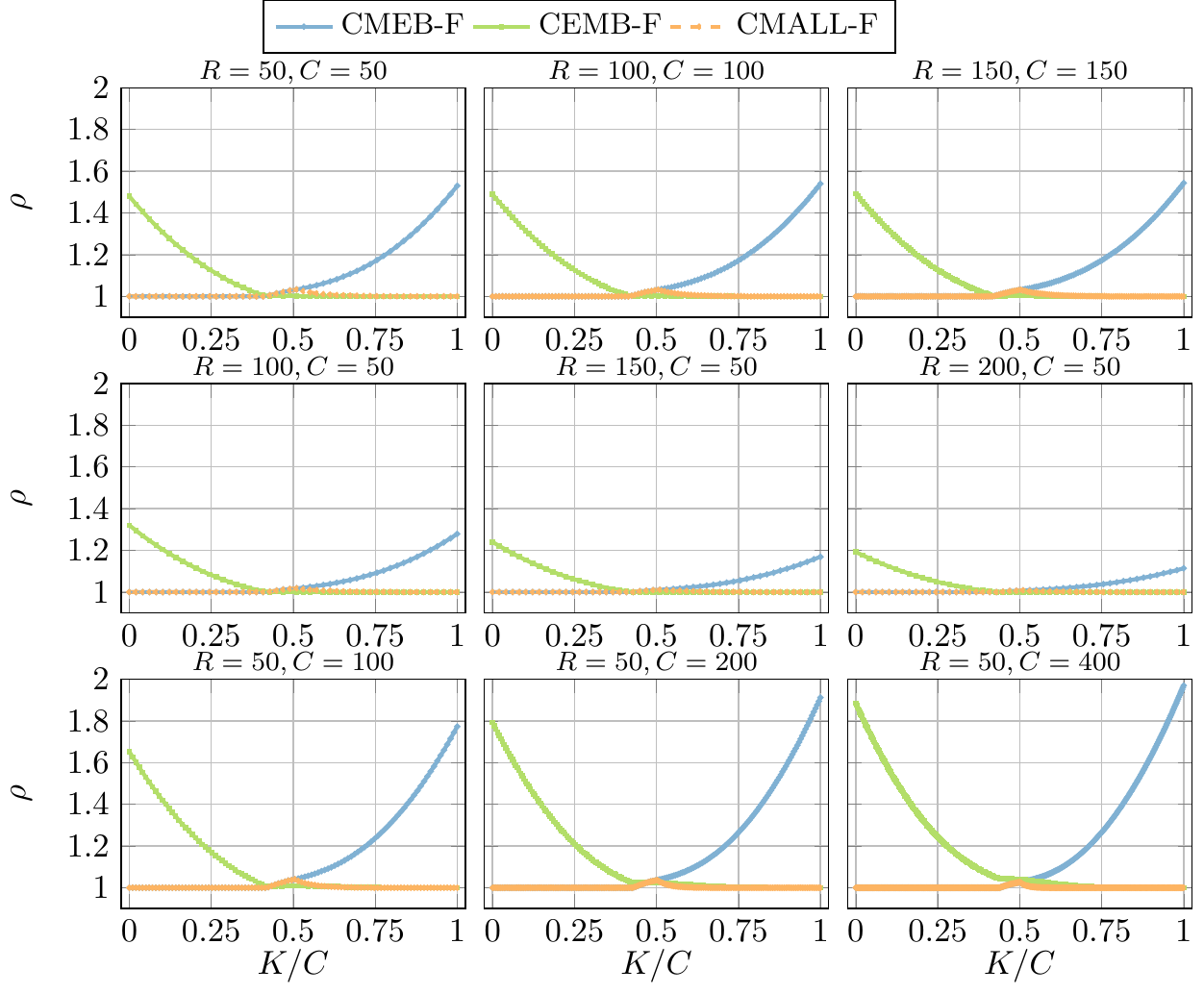}
    \caption{All the algorithms with the full-grid scenario.}
    \label{fig:full-three}
\end{figure}

\algcmallf performs very well since its DP \cmall is always very close to \copt.
On the other hand, as expected, \algcembf and \algcmebf are heavily affected by the value of $K$.
In fact, for small values of $K$, \algcembf reports bad results, while when $K$ increases the ratio $\rho$ tends to $1$.

We note that the performance of \algcembf is almost a reflection, in the vertical line $\frac{K}{C} = 0.5$, of the performance of \algcmebf. In particular, in Figure~\ref{fig:full-three}, we can observe that \algcmallf and \algcmebf perform similarly when $0 \le \frac{K}{C} \le 0.5$ (the two lines, i.e., the orange and the blue ones, almost coincide), and \algcmallf and \algcembf perform similarly when $\frac{K}{C} \le 0.5 \le 1$ (orange and green lines).

We also note that, when $R=C$ (Figure~\ref{fig:full-three}, first row), the worst cases of \algcembf and \algcmebf, i.e., the worst $\rho$ values exhibited, almost coincide; when $R>C$ (Figure~\ref{fig:full-three}, second row), the worst case of \algcmebf is slightly better than that of \algcembf; and when $R<C$ (Figure~\ref{fig:full-three}, third row) the worst case of \algcmebf is worse than that of \algcembf.

It is interesting to note that the performance of each algorithm is better when $R > C$, than when $R<C$.

\begin{figure}[ht]
    \centering
    \includegraphics[scale=0.9]{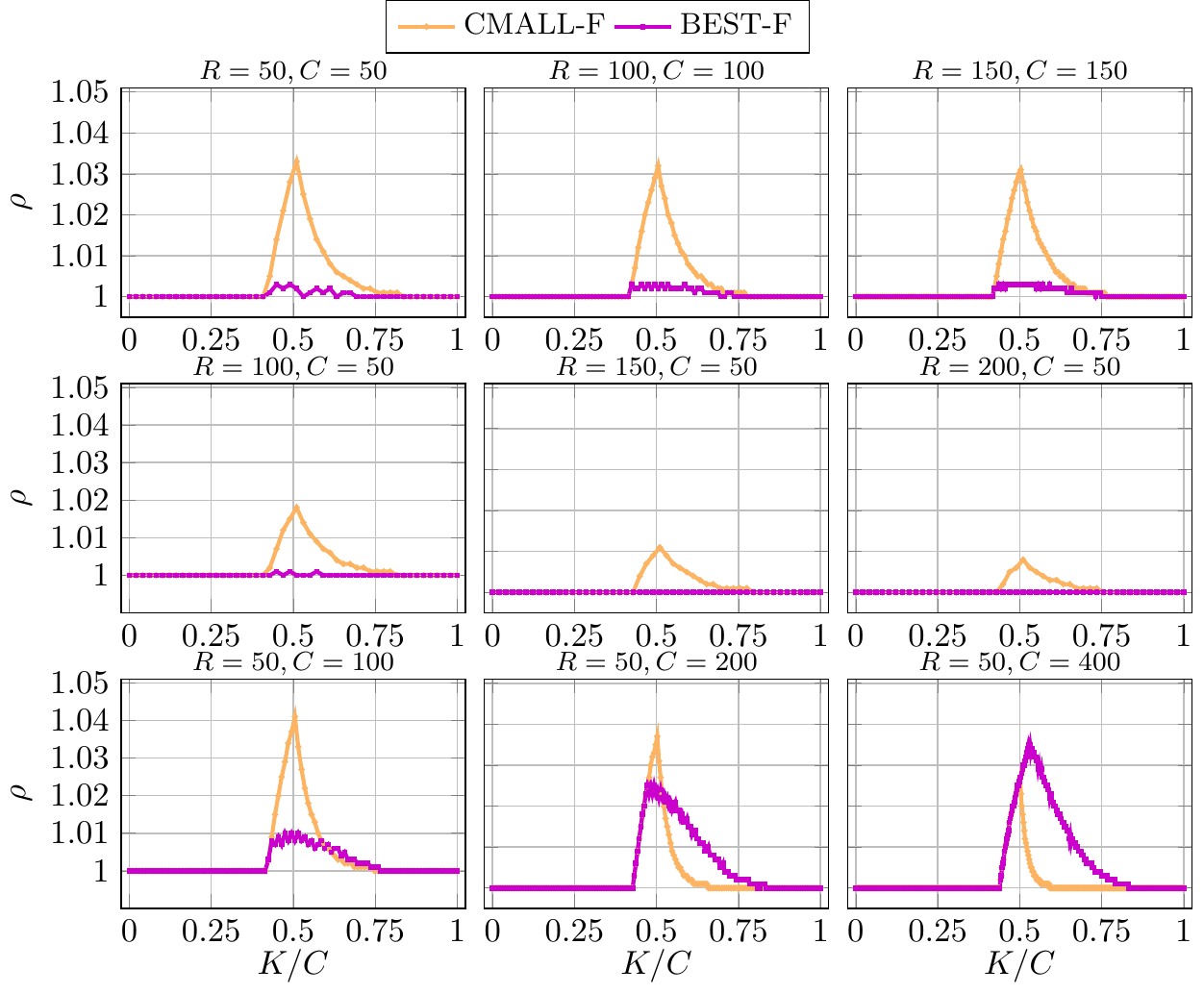}
    \caption{\algcmallf vs \algbestf with the full-grid scenario.}
    \label{fig:full-two}
\end{figure}

The almost symmetric performance, across $K/C = 0.5$, exhibited by \algcembf and \algcmebf motivates the design of a hybrid algorithm, \algbestf, that simply returns the best result among the two.
In Figure~\ref{fig:full-two} we compare \algcmallf against the best among \algcembf and \algcmebf (highlighted as \algbestf).
Since the experimental results are very close to the optimum ratio, which is $1$, here we reduce the scale along the $y$-axis.
In general, \algbestf almost always outperforms \algcmallf, and it is very close to \algoptf.
For instance, when $R=150$ and $C=50$, \algbestf is constantly $1$, and so we can state that when $R \gg C$, \algbestf is comparable to \algoptf.
With the scaled $y$-axis in Figure~\ref{fig:full-two}, it becomes clear \algcmallf has the worst performance when $\frac{K}{C} \approx 0.5$. %
Nevertheless, under these circumstances, the ratio $\rho$ is very low.
It is worth noting that in the case of $C \gg R$, \algbestf is not better than \algcmallf (e.g., $R=50, C=400$).
In fact, when $C=2R$ \algbestf is preferable, but when $C=8R$ (or more) it seems that \algcmallf is more consistent.
However, in hypothetical real cases with a reasonable mix of rows and columns, \algbestf basically performs like the optimal \algoptf, albeit more efficiently, since \algbestf is a constant time algorithm.
Finally, it can be seen that \algcmallf is always below the guaranteed threshold $\sqrt{2}$ on each plot.

\begin{figure}[ht]
    \centering
    \includegraphics[scale=0.9]{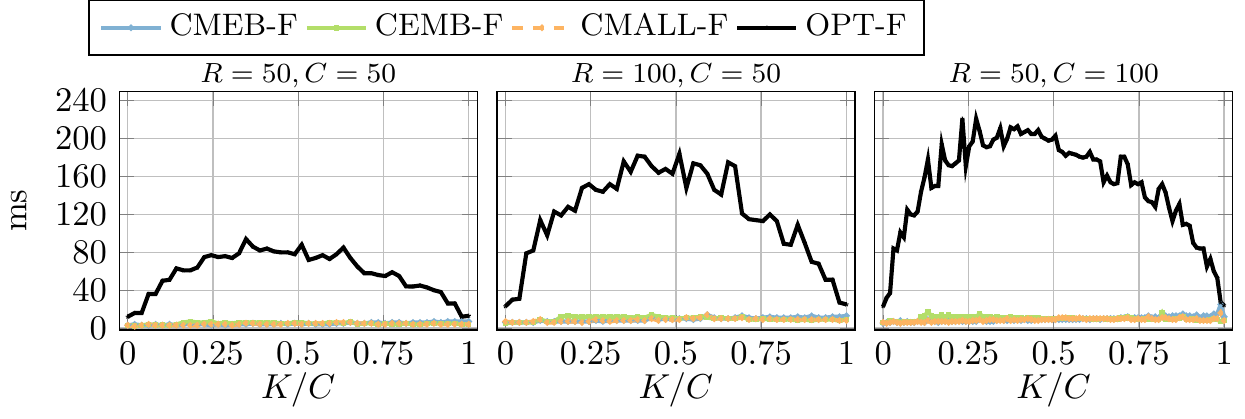}
    \caption{Running time of all the algorithms with the full-grid scenario.}
    \label{fig:full-time}
\end{figure}

\medskip\noindent\textit{Running Times}

In Figure~\ref{fig:full-time} we report the experimental running time (in milliseconds) of all the algorithms with the full-grid scenario.
In particular, the constant-time algorithms like \algcmallf take, on average, $2$--$6 \unit{ms}$, and therefore the graphs of their running times almost coincide.
On the other hand, the optimal logarithmic-time algorithm \algoptf takes much more time -- in the order of tenths of a second (in the worst case, $200 \unit{ms}$).
The experimental running time of the \algoptf algorithm is low when the grid is either almost Manhattan or almost Euclidean, i.e., when $\frac{K}{C} \to 0$ or $\frac{K}{C} \to 1$, respectively.

In the following we will show that the behavior of \algoptf is consistent with the claimed time complexity $\mathcal{O}(\log K)$.
To better understand this behavior we discuss the following example (for almost Manhattan grids): 
When $K$ is small with respect to $C$, say $\frac{K}{C}=0.25$ (see Figure~\ref{fig:full-time}), Algorithm \algoptf evaluates the column number, in the interval $[\hk, K]$, with the least cost (Line~\ref{code:algoptf:init-case1} of Algorithm~\ref{alg:algoptf}).
Specifically, it performs an efficient binary search in the interval $[\hk, K]$.
If we assume that $\frac{K}{C}=0.25$ or $\frac{K}{C}=0.5$ (Line~\ref{code:algoptf:init-case1} of Algorithm~\ref{alg:algoptf}), then the interval $[\hk, K]$ has width $\frac{K}{2}$.
Since when $C$ is fixed, varying $\frac{K}{C}$ from $0.25$ to $0.5$, the interval width $\frac{K}{2}$ increases from $\frac{C}{8}$ to $\frac{C}{4}$, the number of steps required by the \algoptf algorithm increase accordingly to the logarithm of $\hk$.

When $K$ is large with respect to $C$, i.e., the grid is almost all Euclidean, say $\frac{K}{C}=0.75$ (see Line~\ref{code:algoptf:init-case2} of Algorithm~\ref{alg:algoptf}), the \algoptf algorithm performs an efficient binary search in the interval $[\hk, \hc]$.
Recall that $\hc = \frac{C}{2}$.
If we assume that $\frac{K}{C}=0.75$, then $K=\frac{3}{4}C$ and $\hk=\frac{K}{2}=\frac{3}{8}C$, and therefore the interval $[\hk, \hc]$ has width $\frac{C}{8}$.
The time complexity is therefore logarithmic in $\frac{C}{8}$ as it was in the case $\frac{K}{C}=0.25$. 
Thus, the time complexity is comparable with that for $\frac{K}{C}=0.25$, as reported in Figure~\ref{fig:full-time}.

It is very important to note that, the plots in Figure~\ref{fig:full-time} also include the pre-processing time required by the Algorithm \algoptf; this pre-processing has time complexity $\mathcal{O}(RK + K)$ and hence increases with $R$ 
and $K \le C$.
In Figure~\ref{fig:full-time}, the pre-processing phase impacts more when $C=100$ because its time complexity depends on $K$, which can be $C$ in the worst case scenario.

\subsubsection{Partial-Grid Scenario}\label{subsubsec:partial-synthetic}

As before, we will first analyze our empirical results of the performance of the algorithms with respect to the delivery costs, and then with respect to the running times.

\medskip\noindent\textit{Delivery Costs}

Figure~\ref{fig:partial-A} compares the algorithms when solving \prob with the partial-grid scenario.
In particular, given the bad behavior of Algorithm \algcembf for large values of $K$ in the full-grid scenario (see Figure~\ref{fig:full-three}), for its partial-grid version, i.e., \algcembp, we only report the results using small values of $K$. For a similar reason, we report the results for Algorithm \algcmebp only for large values of $K$.
Moreover, recall that the best DP returned by \algcembp and \algcmebp is the optimum.

\begin{figure}[ht]
    \centering
    \includegraphics[scale=0.9]{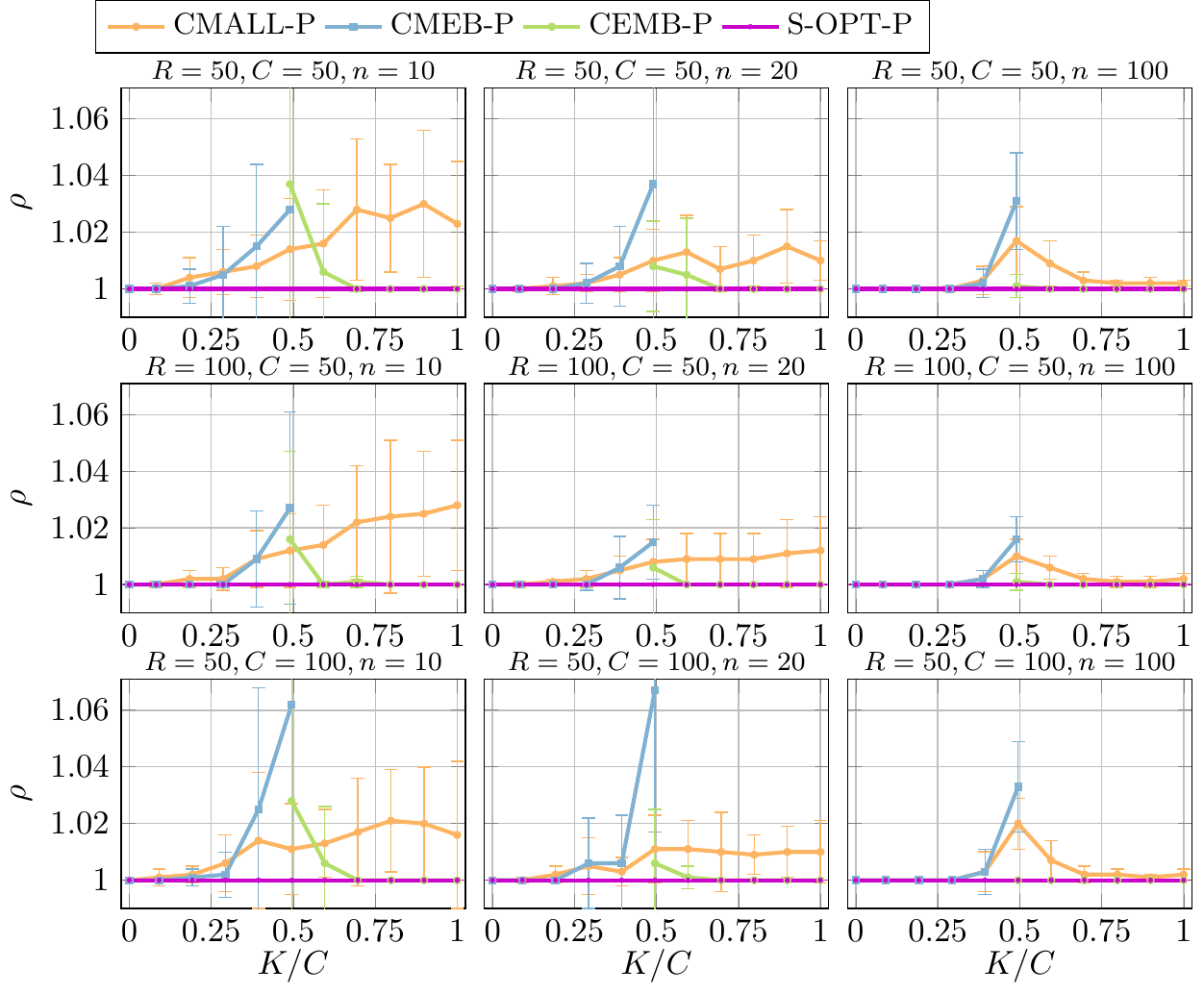}
    \caption{All the algorithms when varying $n$ with the partial-grid scenario.}
    \label{fig:partial-A}
\end{figure}

As before, in Figure~\ref{fig:partial-A} we present our empirical results in three groups, -- $R=C$, $R>C$, and $R<C$.
As expected, both the \algcembp and \algcmebp algorithms have different behavior, and one is more suitable than the other depending on the particular value of $K$.
When $n$ is small, all the algorithms are variable and the standard deviation is large.
Much more stable results can be observed when $n$ increases.
This is due to the fact that for larger values of $n$ the subset $H \subset G$ is ``denser'', and all the algorithms are less affected by the randomness. 
In general, the algorithms seem to perform better when $R > C$.
Of particular note is the behavior exhibited by Algorithm \algsuboptp -- it performs almost as well as Algorithm \algoptp! We discuss this further in Section~\ref{subsubsec:noteworthy} below.

\begin{figure}[ht]
    \centering
    \includegraphics[scale=0.9]{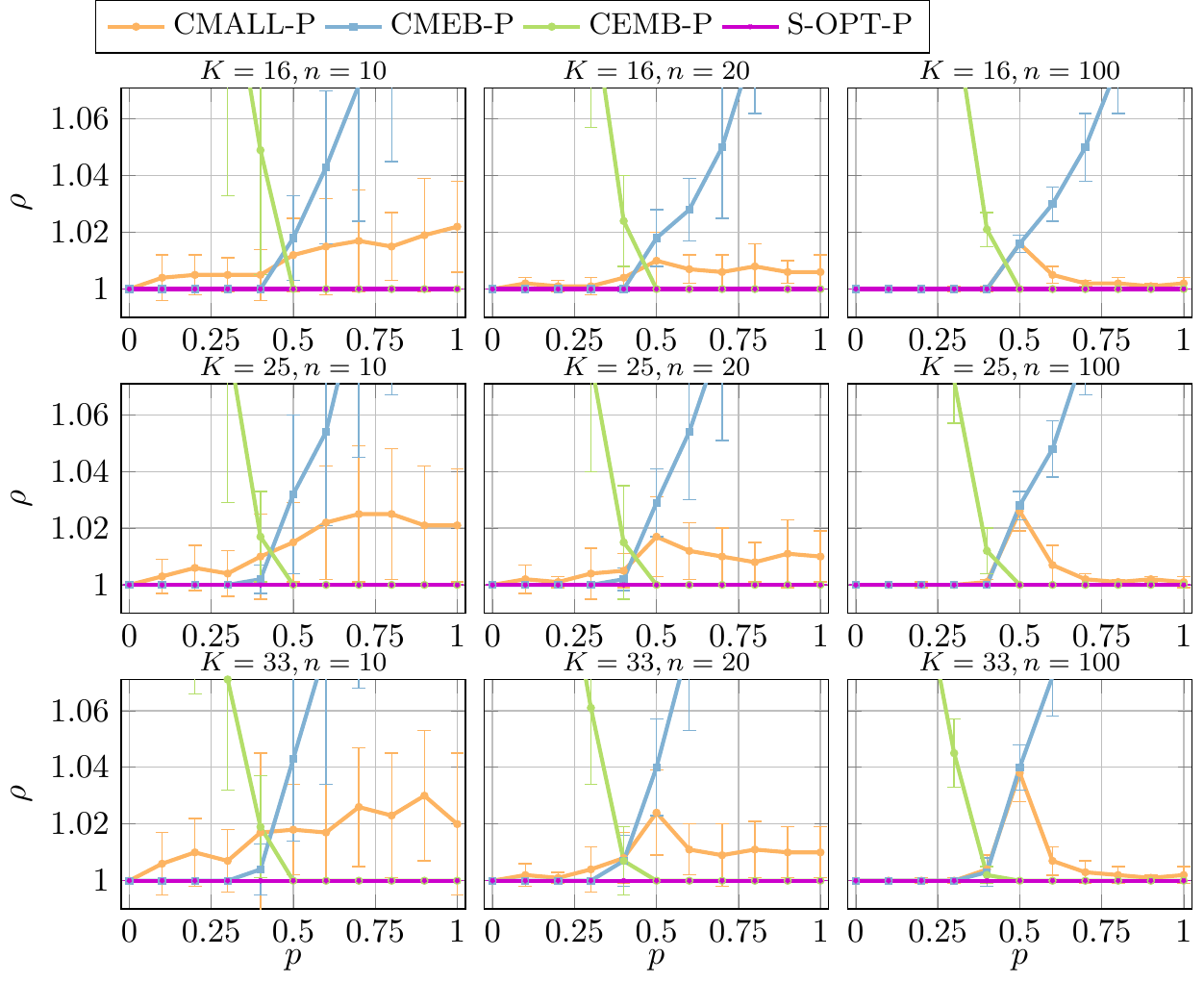}
    \caption{All the algorithms when fixing $R=50$ and $C=50$, and varying $p$ with the partial-grid scenario.}
    \label{fig:partial-B}
\end{figure}

In Figure~\ref{fig:partial-B} we present our empirical results for all the algorithms with the partial-grid scenario when varying the parameter $0 \le p \le 1$ such that $n_E=n \cdot p$ and $n_M=n \cdot (p-1)$.
For simplicity, in Figure~\ref{fig:partial-B} we only depict the squared layout, in which on each row we change the border position.
Not surprisingly, when $n$ is low the variability (standard deviation) is high, and algorithms like \algcmallp return a good ratio $\rho$ (average).
The other two algorithms, i.e., \algcembp and \algcmebp, have the usual symmetric behavior.
However, in this setting, such behavior is much more evident since we vary the number of deliveries to distribute in the two sub-areas.
So, for small values of $p$, \algcmebp is \textit{almost} optimum, and for large values of $p$ \algcembp is almost optimum.
Instead, when $p \approx 0.5$, the two algorithms lose their efficacy.
This can be seen for the three values of $K$ shown in Figure~\ref{fig:partial-B}.
Once again, surprising, but expected based on the results presented in Figure~\ref{fig:partial-A}, \algsuboptp performs almost as well as the optimum algorithm.

\begin{figure}[ht]
    \centering
    \includegraphics[scale=0.9]{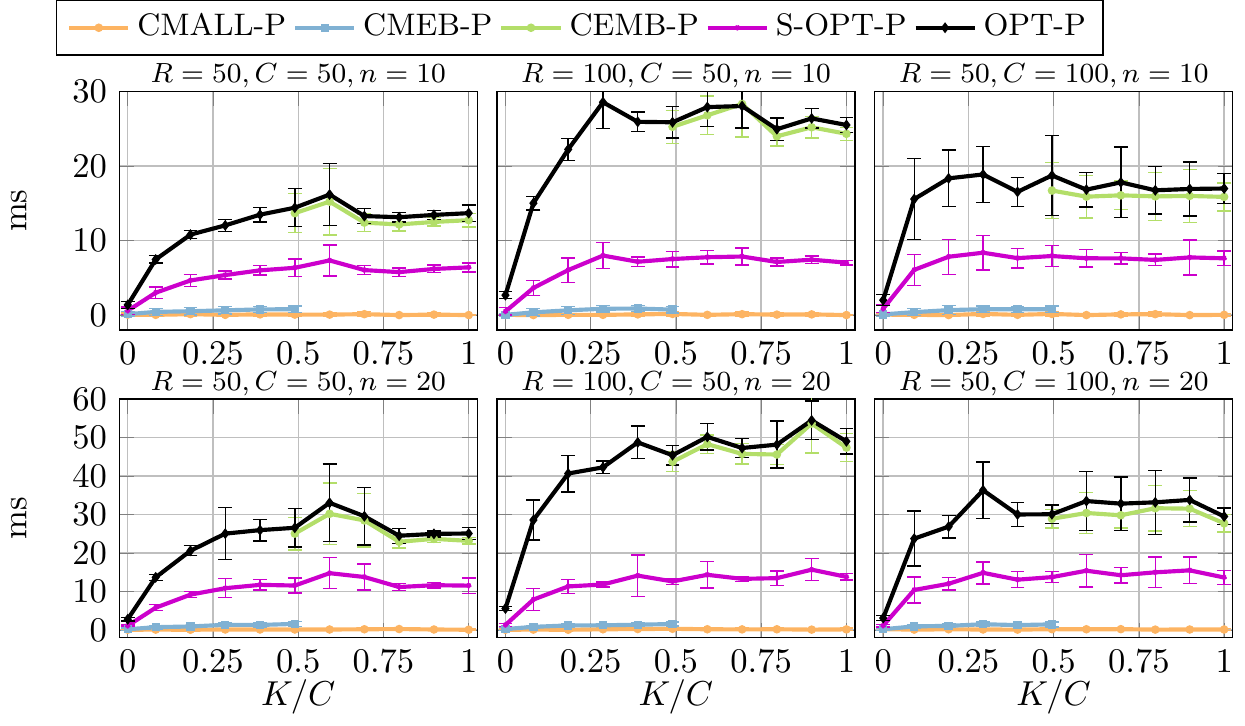}
    \caption{Running time of all the algorithms with the partial-grid scenario.}
    \label{fig:partial-A-time}
\end{figure}

\medskip\noindent\textit{Running Times}

Finally, in Figure~\ref{fig:partial-A-time} we report the experimental running time (in milliseconds) of all the algorithms with the partial-grid scenario.
Notice that the plots have a different scale in the $y$-axis, i.e., the ones in the first row have $30 \unit{ms}$, while the ones in the second row have $60 \unit{ms}$.
This has been done to emphasize the linear dependency on $n$: when $n$ doubles, the time performance on the $y$-axis doubles too, but since the scale of $y$ is halved, the behavior appears to be the same.
As expected, the fastest algorithm is \algcmallp whose time complexity is $\mathcal{O}(n)$, followed by \algcmebp with a time complexity of $\mathcal{O}(nR)$ (see Table~\ref{tab:comp_algs_partial}).
In Figure~\ref{fig:partial-A-time}, it is difficult to appreciate the actual difference between \algcmallp and \algcmebp, but, for instance, when $\frac{K}{C}=0.5$, on average, \algcmallp takes $0.061 \unit{ms}$, while \algcmebp takes $0.786 \unit{ms}$ (i.e., $10 \times$ more than \algcmallp).

The running time of the \algcmebp is ``linear'' since its cost does not depend on the value of the border $K$.
The sub-optimal algorithm \algsuboptp experimentally performs very well (theoretically it takes $\mathcal{O}(n \log R \log K)$ in time), in the order of $5$--$15 \unit{ms}$, with respect to the number of deliveries $n$.
The \algcembp algorithm, which has a time complexity of $\mathcal{O}(nR \log K)$, depends on the binary search function invoked $R$ times.
The binary search itself depends on the size of the border $K$, and therefore its running time increases (by a factor of $\log K$) when the \EM tends to become more ``Euclidean''.

In the full-grid scenario (Figure~\ref{fig:full-time}), the algorithm that exploits the binary search (i.e., \algoptf) has a parabolic trend, justified by the fact that when $K$ is either small or large with respect to $C$, Algorithm~\ref{alg:algoptf} executes differently.
In this case (partial-grid scenario), though, we do not have a similar behavior, and the time complexity of Algorithm \algcembp is simply affected by the factor $\log K$.
Nevertheless, it is still possible to observe a light drop for \algcembp when $\frac{K}{C} \approx 0.75$.
Obviously, the optimal algorithm \algoptp, whose time complexity is $\mathcal{O}(nR \log K)$, takes the most time (as seen in Figure~\ref{fig:partial-A-time}) since it runs both \algcembp and \algcmebp, and returns the best point among the ones outputted by both.

\subsubsection{Noteworthy Observations}\label{subsubsec:noteworthy}

Based on the theoretical and empirical analyses above, we make the following noteworthy observations about our algorithms for solving \prob in the partial-grid scenario.

In general, Algorithm \algcmallp has the best trade-off with respect to obtained performance and time complexity.
In fact, when varying the number $n$ of deliveries, on average, its ratio $\rho$ is always lower than $1.03$. Considering its time complexity -- linear with respect to the number of deliveries -- \algcmallp is a very good compromise.
As noted above, both the \algcembp and \algcembp algorithms have different behavior, and one is more suitable than the other depending on the particular value of $K$.
However, as proven in Section~\ref{sec:full:properties}, their combination, by returning the best among the two, ensures finding the optimal DP.
So, despite their swinging behavior and relatively high complexity, they can be used to find the optimal DP.

We noted above the surprising, almost-optimal behavior of the sub-optimal algorithm \algsuboptp:
On the $33$ random instances, for each combination of rows, columns, border, and number of deliveries, the returned DP was also the optimal one.
For this reason, on each plot of Figure~\ref{fig:partial-A}, \algsuboptp exhibits a completely flat line.
However, on an extensive ``brute-force'' campaign with $10,000+$ random instances, we found $17$ instances where the returned point \csopt was different from the optimal \copt.
Nevertheless, in these $17$ instances the worst ratio $\rho$ was $1.005$.
Of course, we have found counterexamples that show that \algsuboptp is not optimum, but it performs extremely well with a time complexity much lower than that of \algoptp.

\subsection{Results with Quasi-real Data}\label{sec:real}

In this section, we test our algorithms in the quasi-real case.
Specifically, we evaluate the performance of our partial-grid scenario algorithms applied on a real-world map.
Clearly, in this case, we still generate random instances of customers.

\subsubsection{Settings}

In this section, we first discuss how the delivery maps are extracted from real cities, and then which commercial drones are capable of delivering products in these areas.

\medskip\noindent\textit{Map Extraction}

In order to evaluate our algorithms in a real-world case, we initially needed to extract the \EMs from real cities.
Figure~\ref{img:examples} illustrates how to roughly perform this extraction task by modeling Chicago, New York, and Miami, in the top, middle, and bottom, respectively.

\begin{figure}[!ht]
    \centering
    \subfloat[Chicago.]{%
        \includegraphics[width=6.0cm]{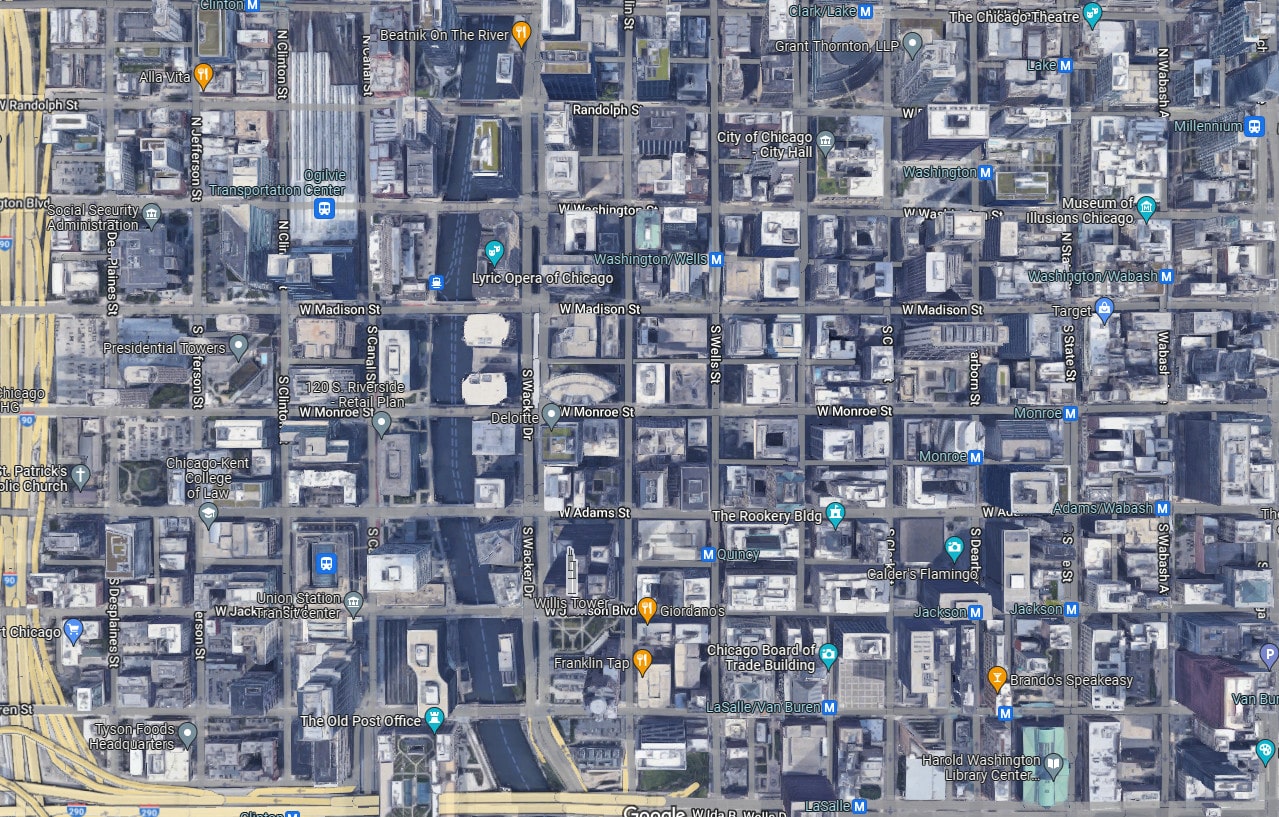}
        \label{img:chicago-original}
    }
    \subfloat[The corresponding \EM $G_1 = (8, 14, 6)$.]{%
        \includegraphics[width=6.0cm]{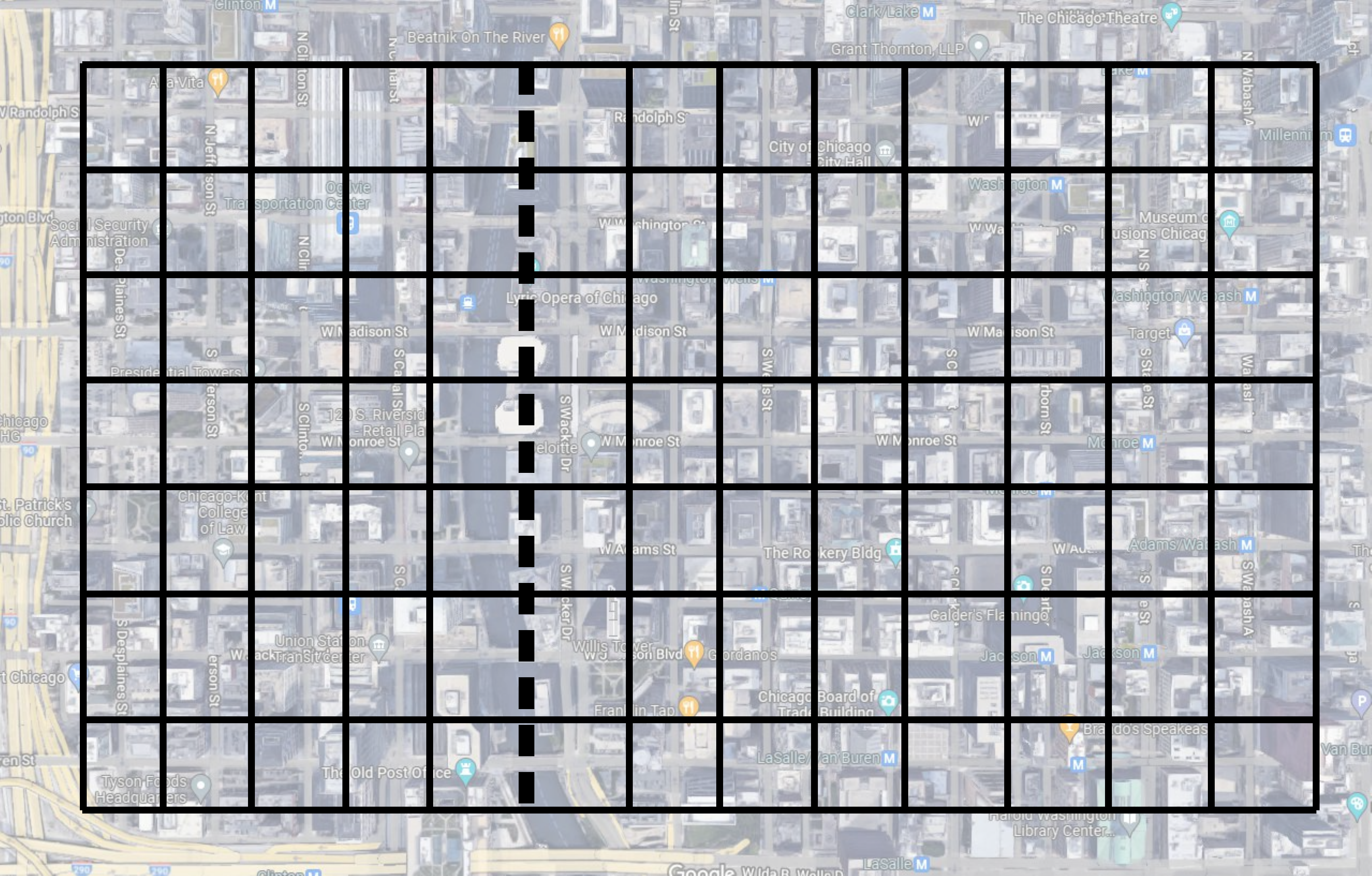}
        \label{img:chicago-grid}
    }
    \hfill
    \subfloat[New York.]{%
        \includegraphics[width=6.0cm]{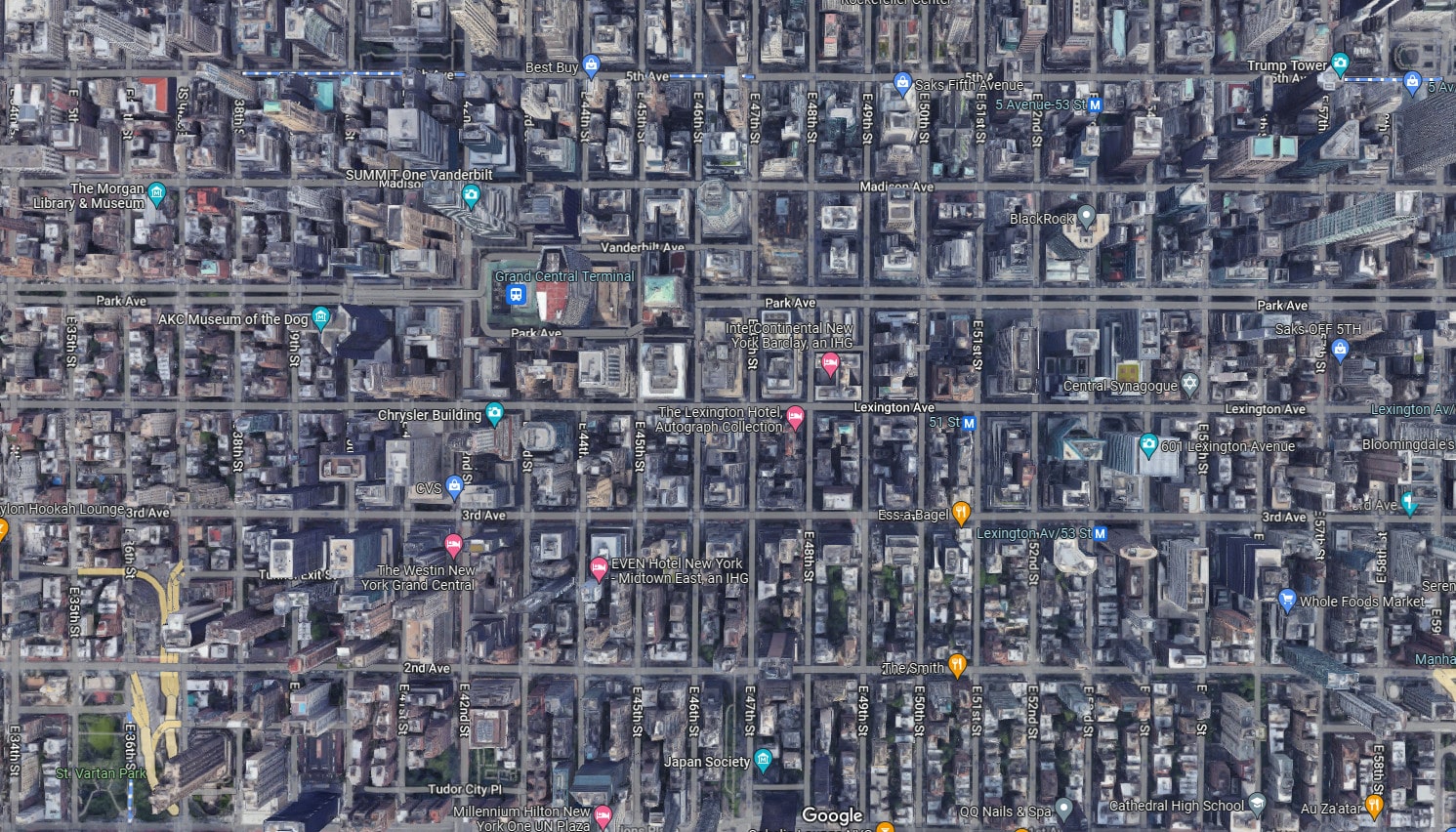}
        \label{img:newyork-original}
    }
    \subfloat[The corresponding \EM $G_2 = (7, 24, 1)$.]{%
        \includegraphics[width=6.0cm]{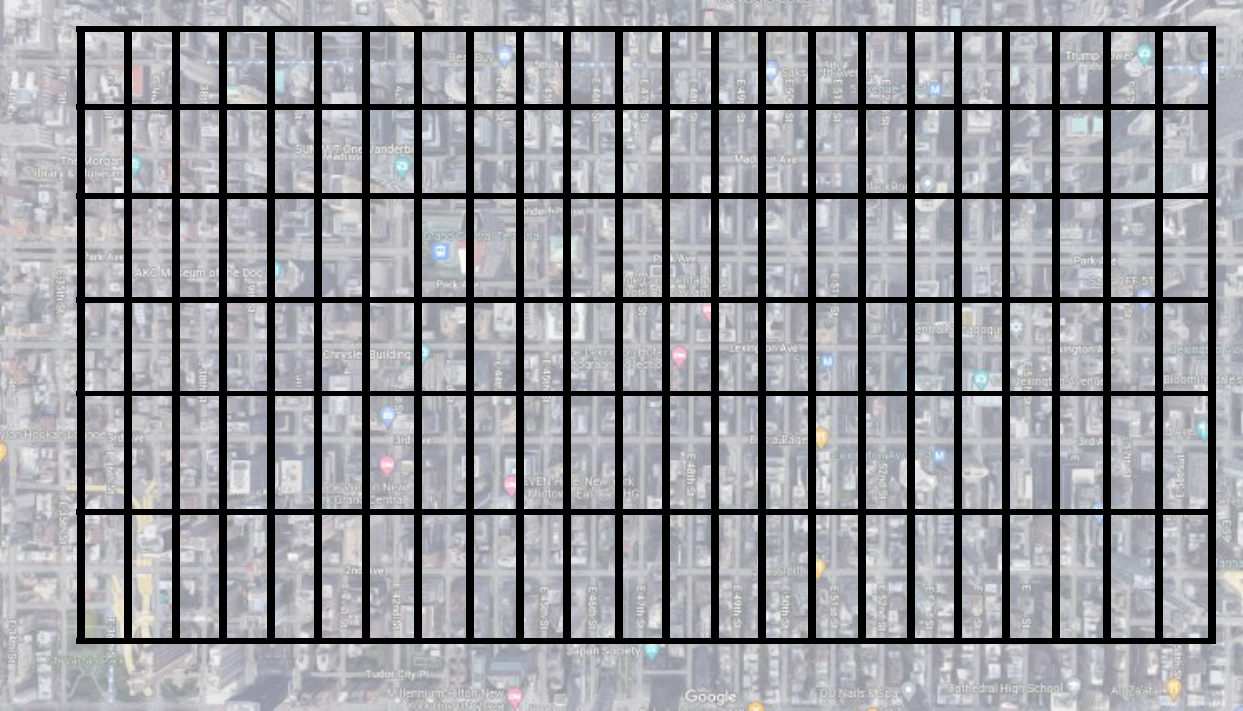}
        \label{img:newyork-grid}
    }
    \hfill
    \subfloat[Miami.]{%
        \includegraphics[width=6.0cm]{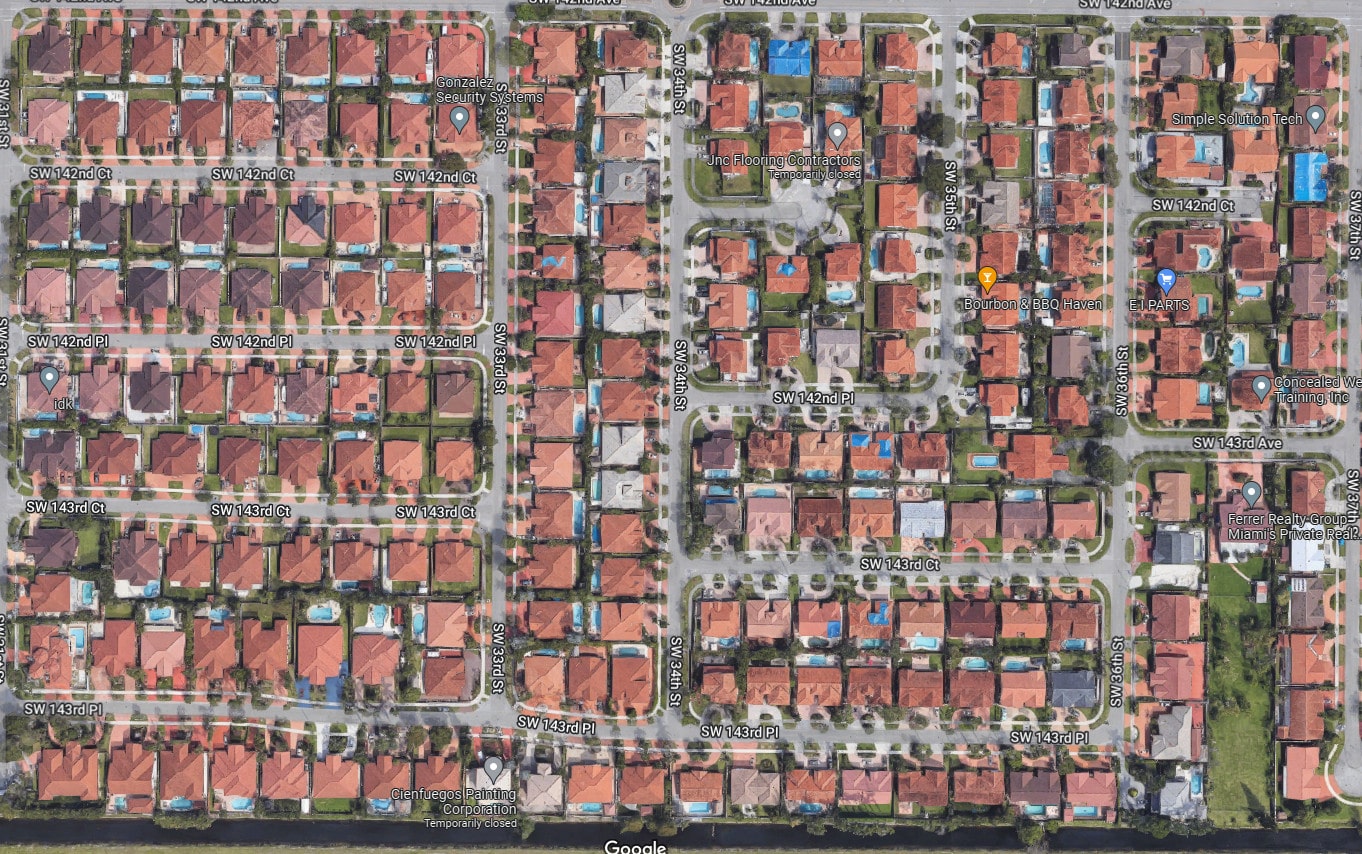}
        \label{img:miami-original}
    }
    \subfloat[The corresponding \EM $G_3 = (9, 20, 20)$.]{%
        \includegraphics[width=6.0cm]{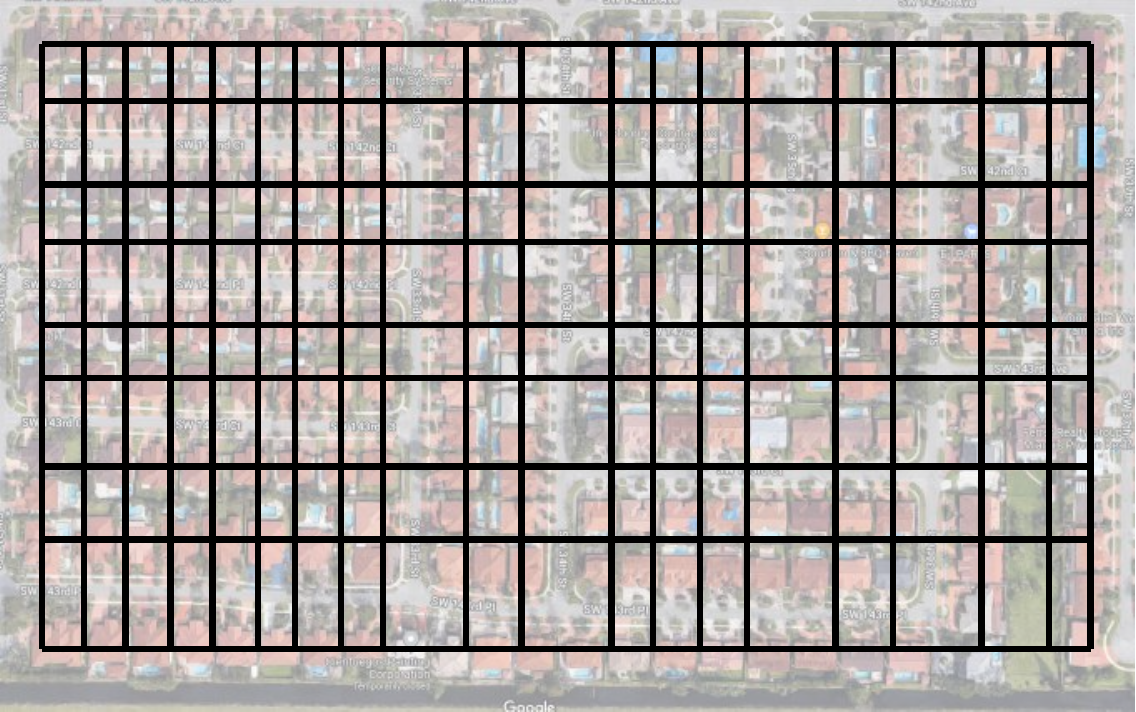}
        \label{img:miami-grid}
    }
    \caption{The original cities and the corresponding extracted \EMs.}
    \label{img:examples}
\end{figure}

In Figure~\ref{img:chicago-original}, there is the top view of a portion of the city of Chicago, while in Figure~\ref{img:chicago-grid} the corresponding \EM, i.e., $G_1 = (8, 14, 6)$.
The average length of a ``block'' is $\approx 120 \unit{m}$, and so the distance among two adjacent vertices can be set.
So, this map $G_1$ has a size of $840 \unit{m} \times 1560 \unit{m}$.
Roughly, the river from the top to the bottom of Figure~\ref{img:chicago-original} splits the portion of Chicago in two contiguous areas, i.e., the left one with relatively low buildings (although some tall buildings are present), and the right one with lots of skyscrapers (although some low buildings are present).
The dashed line in Figure~\ref{img:chicago-grid} delimits the border between the two areas.

In Figure~\ref{img:newyork-original}, there is the top view of a portion of the city of New York, while in Figure~\ref{img:newyork-grid} the corresponding \EM, i.e., $G_2 = (7, 24, 1)$.
This means that the extracted \EM is a full Manhattan grid.
In this case, differently from Chicago, we can observe that the length of a block on ``streets'' (horizontal) is $\approx 80 \unit{m}$, while the length of a block on ``avenues'' (vertical) is $\approx 150 \unit{m}$.
So, this map $G_2$ has a size of $660 \unit{m} \times 2530 \unit{m}$.
Since in our model both the lengths must be equal, we average the two and set a value of $110 \unit{m}$.

Finally, in Figure~\ref{img:miami-original}, there is the top view of a portion of the city of Miami, while in Figure~\ref{img:miami-grid} the corresponding \EM, i.e., $G_3 = (9, 20, 20)$.
This means that the extracted \EM is a full Euclidean grid.
Even the pure Euclidean maps are hard to find since roads and houses can have different sizes, lengths, and so on.
The portion of Miami city reported in Figure~\ref{img:miami-original} contains some irregularities, especially in the right portion.
Here, the average distance among two adjacent houses is in the order of $\approx 25 \unit{m}$.
So, this map $G_3$ has a size of $200 \unit{m} \times 475 \unit{m}$.

\medskip\noindent\textit{Drone Selection}

Previously, we have seen how to extract an \EM from a real city.
Therefore, the next step is the drone selection.
There are a plethora of off-the-shelf drones that are available at the moment, but only a very few have the required characteristics for performing deliveries.
One of the most common drones able to do deliveries with some autonomy is the DJI Matrice 300 RTK (briefly Matrice).
According to~\cite{matrice300}, this drone can carry up to $2.7 \unit{kg}$, can fly up to $8 \unit{km}$ away from the remote controller, can fly for $30 \unit{min}$ (with maximum payload), and can fly up to $17 \unit{m/s} \approx 60 \unit{km/h}$ in P-mode (i.e., the Positioning mode recommended in autonomous flights).
Clearly, the temporal bound of $30 \unit{min}$ also depends on the drone's speed and the payload as well as the current weather conditions.
Nevertheless, we still continue to use these numbers as a reference.
So, the Matrice can fly a distance of approximately $30 \unit{km}$.
Recalling that the previous three grids $G_1$, $G_2$, and $G_3$ have an area of approximately $1.2 \unit{km^2}$, $1.5 \unit{km^2}$, and $0.1 \unit{km^2}$, respectively, we believe that the Matrice can operate in these neighborhoods of the cities.

\subsubsection{Results}

Finally, Figure~\ref{fig:real} compares the algorithms in the partial-grid scenario in a quasi-real case.
In particular, in the $x$-axis we report the number of deliveries $n$, while in the $y$-axis we report the traveled drone's distance in $\unit{km}$.
As aforementioned, the selected Matrice drone can fly for approximately $30 \unit{km}$, and for this reason we put a solid red line on that threshold in each plot of Figure~\ref{fig:real}.
Moreover, recall that $G_1$, $G_2$, and $G_3$ correspond to the map of Chicago (mixed \EM), New York (full Manhattan \EM), and Miami (full Euclidean \EM), respectively.

\begin{figure}[ht]
    \centering
    \includegraphics[scale=0.9]{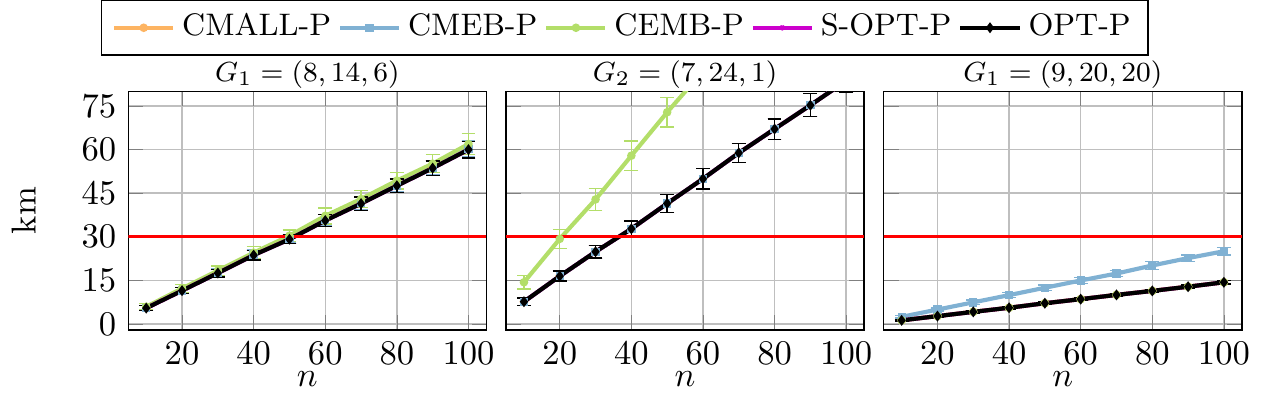}
    \caption{Traveled drone's distance when comparing all the algorithms in the quasi-real case. The \EMs are the ones extracted from the maps in Figure~\ref{img:examples}.}
    \label{fig:real}
\end{figure}

Our first observation is that the differences among the algorithms, in terms of distance ($\unit{km}$) traveled, is small, with a couple of exceptions.
In fact, as we discussed in the synthetic case, it is not profitable at all to perform the \algcembp algorithm when the \EM is a full Manhattan grid (see $G_2$ in Figure~\ref{fig:real}), or similarly, to perform the \algcmebp algorithm when the \EM is a full Euclidean grid (see $G_3$).
In the general mixed case (see $G_1$), though, all the algorithms perform decently with respect to the optimal one.

Taking into account the Chicago map, i.e., $G_1$, we can immediately observe that the adopted Matrice drone can potentially accomplish the whole mission of up to $n=50$ deliveries with a single battery charge, regardless of the algorithm.
Specifically, on average, both \algoptp and \algsuboptp require $29.181 \unit{km}$ of travel, followed by $29.297 \unit{km}$ for \algcmallp, $29.324 \unit{km}$ for \algcmebp, and finally $30.395 \unit{km}$ for \algcembp.
We note that the drone, if employing Algorithm \algcembp, may be forced to fly beyond the aforementioned and approximated threshold of $30 \unit{km}$.
However, here one can appreciate that the difference, on average, among the best and the worst performing algorithm is $\approx 1 \unit{km}$, which can be detrimental in some borderline cases.
Moreover, even neglecting the \algcembp algorithm, the above difference is still about $0.143 \unit{km}$.
Recalling that the average block length in $G_1$ is $0.12 \unit{km}$, this small difference could potentially result in a delivery not being completed.

In the $G_2$ map in New York City, which is a full Manhattan grid, we can safely do up to $n=35$ deliveries.
In this case, we cannot rely to the \algcembp algorithm, since it is not suitable for full Manhattan grids.
It is important to recall that, in $G_2$, \algcmallp returns, by definition of the Manhattan median, the optimal solution.

Finally, in the full Euclidean grid in Miami ($G_3$), due to the smaller distances in the example, the Matrice drone can safely perform even more than $n=100$ deliveries.
Even relying to the \algcmebp algorithm, we are still able to serve $100$ customers in the area.
It is important to recall that, in $G_3$, \algcembp returns the optimal solution.

\section{Conclusion}\label{sec:conclusion}
We considered a drone-based delivery system for the ``last-mile'' logistics of small parcels, medicines, or viral tests, in \EMs. 
The shortest path in an \EM concatenates Euclidean and  Manhattan distances.
We solved the \prob on \EMs whose goal is to minimize the sum of the distances between the locations to be served and the drone's DP.
Finding the most suitable DP has many implications that can impact the expected delivery time for customers, the energy consumption of drones, and in general, the broader environmental impacts, e.g., the quantity of CO\textsubscript{2} emissions, when relying on trucks.
We propose efficient algorithms to exactly solve the problem in \EMs in both the full-grid and the partial-grid scenarios under the assumption that the vertex distance is unitary.
We also run our algorithms on quasi-real cases by extracting the \EMs from real cities in the United States.

Although we attempted to run our algorithms on real cities, we are aware that our \EM model is too simplified to characterize any real-world scenario.
In future work, we intend to extend the introduced mixed-grid model to more general layouts (e.g., a rural area inside an urban area in Central Park, New York).
Nevertheless, we believe that our work could be the starting point to devising much more complex scenarios to better model real-world scenarios.
For example, we could use our technique of map extraction (see Section~\ref{sec:real}) to construct a grid, $G$, where the each individual grid square is labeled as ``Buildings'' or ``Park''. A contiguous group of squares labeled ``Buildings'' is effectively a Manhattan grid, and a contiguous group of squares labeled ``Park'' is a Euclidean grid. Then, we could sub-divide $G$ into multiple \EMs, and apply our algorithms suitably in each \EM.

Another interesting variant to study is the use of multiple drones that are responsible for delivering packages to different partitions of the customers.
In this case, we will need efficient clustering algorithms for minimizing suitable metrics.

\bibliographystyle{IEEEtran}
\bibliography{references}

\end{document}